\newif\ifFull
\renewenvironment{proof}{\noindent{\bf Proof:}}{\hspace*{\fill}\qed\bigskip}
\spnewtheorem{appthm}{Theorem}[section]{\bfseries}{\itshape}
\newtheorem{theorem}{Theorem}[section]
\newtheorem{appthm}{Theorem}[section] 
\newtheorem{lemma}[theorem]{Lemma}
\newtheorem{corollary}[theorem]{Corollary}
\newcommand{\mydiv}{\textrm{ div }}
\newcommand{\R}{\mathbb{R}}
\newcommand{\N}{\mathbb{N}}
\title{Fully Retroactive Approximate Range \\ 
and Nearest Neighbor Searching}
\author{Michael T. Goodrich \qquad Joseph A. Simons \\
Department of Computer Science, University of California, 
	   Irvine, USA.
}
\author{Michael T. Goodrich \and Joseph A. Simons }
\institute{Department of Computer Science, University of California, 
	   Irvine, USA.}
\date{}
\begin{document}

\maketitle

\begin{abstract}
We describe fully retroactive dynamic data structures
for approximate range reporting 
and approximate nearest neighbor reporting. 
\ifFull
We show how to maintain, for any positive constant $d$, a set of $n$
points in $\R^d$ indexed by time such that we can perform
insertions or deletions at any point in the timeline in 
$O(\log n)$ amortized time. We support, for any small constant $\epsilon>0$,
$(1+\epsilon)$-approximate range reporting queries at any point in the
timeline in $O(\log n + k)$ time, where $k$ is the output size.
We also show how to answer $(1+\epsilon)$-approximate nearest neighbor
queries for any point in the past or present in $O(\log n)$ time.
\fi
\end{abstract}

\pagestyle{plain}

%
%
\section{Introduction}
Spatiotemporal data 
types are intended to represent objects that have geometric
characteristics that change over 
time\ifFull{ (e.g., see~\cite{hktg-eiso-02,mghsv-stdt-99,tr-sstd-99}).}
\else{.}
\fi

The important feature of such objects is that their critical
characteristics, such as when they appear and disappear 
in a data set, exist in a \emph{timeline}.
The representation of such objects has a number of important
applications, for instance,
in video and audio processing, geographic information
systems, and historical archiving.
Moreover, due to data editing or cleaning needs, 
spatiotemporal data sets may need to be updated in a dynamic fashion,
with changes that are made with respect to the timeline.
\ifFull
Some possible examples of such dynamic updates include:
\begin{itemize}
\item
A video editor may wish to add a 3-second video 
segment at the 20-second mark in a 27-second video and have the result
be a seamless 30-second video (say, for a television commercial).
\item
A credit reporting company might need to remove some historical
transactions in a consumer's credit report to reflect the fact that
these transactions were entered in error.
\item
Some of the historical trajectories in a collection of GPS traces may need to be
changed to correct calibration errors that are discovered only after
subsequent post-processing.
\end{itemize}
\fi
Thus, in this paper we are interested in methods for dynamically maintaining
geometric objects that exist in the context of a timeline.
Queries and updates happen in \emph{real time}, but are indexed in
terms of the timeline. 
\ifFull
For instance,
one can ask to mark an object to exist for the first time 
at time index $t_0$, that is, to be inserted at time $t_0$.
Likewise, one may ask
to mark an object so it is identified as 
removed as of time index $t_1$, that is, to be deleted at time $t_1$.
One may also ask to query the data set as of time index $t_2$, say,
to ask for an approximate nearest neighbor of a point $p$ as of time
index $t_2$.
These updates and queries occur in real time with the intent that the
timeline for the data set is made immediately consistent after each
update and each query is performed correctly with respect to the
current state of the timeline.

For example, suppose we have a set of 1-dimensional points, 
$X=\{1, 4, 7, 10, 13\}$, so that each $x\in X$ is inserted into
the timeline at time index $t=x$ and never removed.
If we then do a nearest-neighbor query for $6$ at time index $12$,
the result would be $7$.
But if we then revise the timeline so that $6$ is inserted at time
index $6$, and we repeat the above nearest-neighbor query, then the
answer would be $6$.
Thus, our intent is that each update we make to the data set should
propagate through the timeline in a consistent fashion, so that future
queries made with respect to the timeline are answered correctly.
Of course, we cannot go back in real time to change the answers to
queries done in the (real) past with respect to the state of the
timeline when that query was made.
\fi

In this paper, we are specifically interested in the dynamic
maintenance of a set of $d$-dimensional points that appear and
disappear from a data set in terms of indices in a timeline, for a
given fixed constant $d\ge 1$.
Points should be allowed to have their appearance and disappearance
times changed, with such changes reflected forward in the timeline.
We also wish to support time-indexed approximate range reporting
and nearest-neighbor queries in such data sets.
That is, we are interested in the dynamic maintenance of spatiotemporal 
point sets with respect to these types of geometric queries.


\subsection{Related Work}

\paragraph{Approximate Searching.}
Arya and Mount~\cite{am-annqf-93} introduce the approximate nearest
neighbor problem for a set of points, $S$, such that given a query point
$q$, a point of $S$ will be reported
whose distance from $q$ is at most a factor
of $(1 + \epsilon)$ from that of the true nearest neighbor of $q$.
Arya {\it et al.}~\cite{amnsw-oaann-98} show that such queries can be
answered in $O(\log n)$ time for a fixed constant $\epsilon>0$.
Chan~\cite{Chan1998} shows how to achieve a similar bound.
Arya and Mount~\cite{am-ars-00} also introduce the approximate range
searching problem for a set, $S$, 
where a axis-parallel rectangle $R$ is given as input
and every point in $S$ that is inside $R$ by a distance of at least
$\epsilon$ is reported as output and no point of $S$ outside $R$ is
reported. Let $k$ be the number of points reported.
Arya and Mount show that such queries can be answered in $O(\log n + k)$
time for fixed constant $\epsilon>0$.
Eppstein {\it et al.}~\cite{egs-sqsdd-05} describe the skip quadtree
structure, which supports $O(\log n+k)$-time approximate range
searching as well as $O(\log n)$-time point insertion and deletion.

Our approach to solving approximate 
range searching and approximate
nearest neighbor problems are based on the quadtree 
structure~\cite{o-mtuas-82}.
In this structure,
regions are defined by squares in the plane, which are
subdivided into four equal-sized squares for any regions
containing more than a single point. So each internal
node in the underlying tree has up to four children and regions have
optimal aspect ratios.
Typically, this structure is organized in a 
\ifFull
compressed fashion~\cite{as-dchan-99,b-acpqh-93,Bern1993,c-faann-83},
\else
compressed fashion~\cite{Bern1993},
\fi
so that 
paths in the tree consisting of nodes with only one non-empty child
are compressed to single edges.  
This structure is related to the
balanced box decomposition (BBD) trees
of Arya {\it et al.}~\cite{am-annqf-93,am-ars-00,amnsw-oaann-98},
where regions are defined by hypercubes
with smaller hypercubes subtracted away, so that the height of the
decomposition tree is $O(\log n)$.  
Similarly, Duncan {\it et al.}~\cite{dgk-bartc-01} define the
balanced aspect-ratio (BAR) trees, where
regions are associated with
convex polytopes of bounded aspect ratio, 
so that the height of the decomposition tree is $O(\log n)$.

\paragraph{Computational Geometry with respect to a Timeline.}
Although we are not familiar with any previous work on 
retroactive $d$-dimensional approximate range searching and nearest-neighbor
searching,
we nevertheless would like to call attention to the fact that
incorporating a time dimension to geometric constructions and data
structures is well-studied in the computational geometry literature.
\begin{itemize}
\item
Atallah~\cite{Atallah19851171} studies several 
\emph{dynamic computational geometry}
problems, including convex hull maintenance, for points moving
according to fixed 
\ifFull
trajectories,
showing an important connection between such problems and Davenport-Schinzel sequences.
This work has been followed by 
a large body of subsequent work 
on additional connections between geometry, 
moving objects, and Davenport-Schinzel sequences.
(E.g., see~\cite{sa-dsstg-95}.)
\else
trajectories.
\fi
\item
Subsequently, a number of researchers have studied geometric motion
problems in the context of \emph{kinetic data structures}
\ifFull
(e.g., see~\cite{aeg-kbisd-98,bgsz-pekds-97,g-kdssar-98,ghsz-kcud-00}).
\else
(e.g., see~\cite{g-kdssar-98}).
\fi
In this framework, a collection of geometric objects is moving
according to a fixed set of known trajectories, and changes can only
happen in the present.
\ifFull
The goal is to
maintain a data structure that supports geometric queries on this set
with respect to the ``current'' time. As time progresses, the data
structure needs to be updated, either because internal conditions
about its state are triggered or because an object changes its
trajectory.
\fi
\item
\ifFull
Sarnak and Tarjan~\cite{st-pplup-86} and
\fi
Driscoll {\it et al.}~\cite{dsst-mdsp-89}
introduce the concept of \emph{persistent data structures}, which
support time-related operations where updates occur in the 
present and queries can be performed in the past, but updates in the past
fork off new timelines rather than propogate changes forward in the same
timeline.
\ifFull
Such structures have been used in a number of applications, such as
in planar point location, which use
space-sweeping operations to construct data structures based on a
static-to-dynamic-to-static framework.
\fi
\end{itemize}
All of this
previous work 
differs from the
approach we are taking in this paper, since
in these previous approaches objects
are not expected to be retroactively changed ``in the past.''

Demaine {\it et al.}~\cite{Demaine2007} introduce the concept of 
\emph{retroactive data structures}, which is the framework we follow
in this paper.
In this approach, a set of data is maintained with respect to a
timeline. Insertions and deletions are defined with respect to this
timeline, so that each insertion has a time parameter, $t$, and so
does each deletion.
Likewise, queries are performed with respect to the time parameter as
well.
The difference between this framework and the dynamic computational
geometry approaches mentioned above, however, is that updates can be done
retroactively ``in the past,'' with the changes necessarily being
propagated forward.
If queries are only allowed in the current state (i.e., with the
highest current time parameter), then the data structure is said to
be \emph{partially retroactive}. If queries can be done at any point
in the timeline, then the structure is said to be \emph{fully
retroactive}.
Demaine {\it et al.}~\cite{Demaine2007} describe a number of results
in this framework, including a fully-retroactive 1-dimensional 
structure for successor queries with $O(\log^2 n)$-time performance.
They also show that any data structure for a decomposable search
problem can be converted into a fully retroactive structure at a cost
of increasing its space and time by a logarithmic factor.

Acar {\it et al.}~\cite{Acar2007} introduce an alternate model of retroactivity,
which they call \emph{non-oblivious} retroactivity.
In this model, one maintains the historical sequence of queries as well 
as insertions and deletions.
When an update is made in the past, 
the change is not necessarily propagated all the way forward to the present.
Instead, a non-oblivious data structure returns the first operation 
in the timeline that has become \emph{inconsistent},
that is an operation whose return value has changed because of the 
retroactive update. 
As mentioned above,
we only 
consider the original model of retroactivity as defined 
by Demaine {\it et al.}~\cite{Demaine2007} in this paper.

Blelloch~\cite{Blelloch2008} and Giora and Kaplan~\cite{Giora2009} 
consider the problem of maintaining a fully retroactive 
dictionary that supports successor or predecessor queries. 
They both base their data structures on a structure by 
Mortensen~\cite{Mortensen2003}, which answers fully retroactive one 
dimensional range reporting queries, although Mortensen framed the problem 
in terms of two dimensional orthogonal line segment intersection reporting. 
In this application, the $x$-axis is viewed as a timeline for a retroactive
data structure for 1-dimensional points.
The insertion of a segment $[(x_1,y),(x_2,y)]$ corresponds to the
addition of an insert of $y$ at time $x_1$ and a deletion of $y$ at
time $x_2$.
Likewise, the removal of such a segment corresponds to the removal of
these two operations from the timeline.
For this 1-dimensional retroactive data structuring problem,
Blelloch and
Giora and Kaplan give data structures that support
queries and updates in $O(\log n)$ time.
Dickerson {\it et al.}~\cite{dickerson:cloning} describe a
retroactive data structure for maintaining the lower envelope of a
set of parabolic arcs and give an application of this structure to
the problem of cloning a Voronoi diagram from a server that answers
nearest-neighbor queries.

\subsection{Our Results}
In this paper, we describe fully retroactive dynamic data structures for
approximate range reporting and approximate nearest neighbor searching. We show
how to maintain, for any positive constant, $d\ge 1$, a set of $n$ points in
$\R^d$ indexed by time such that we can perform insertions or deletions at any
point in the timeline in $O(\log n)$ amortized time. We support, for any small
constant $\epsilon>0$, $(1+\epsilon)$-approximate range reporting queries at any
point in the timeline in $O(\log n + k)$ time, where $k$ is the output size.
Note that in this paper we consider circular ranges defined by a query point $q$
and radius $r$.  We also show how to answer $(1+\epsilon)$-approximate nearest
neighbor queries for any point in the past or present in $O(\log n)$ time.  Our
model of computation is the real RAM, as is common in computational geometry
algorithms (e.g., see~\cite{ps-cgi-90}). 

The main technique that allows us to achieve these results
is a novel, multidimensional version of
fractional cascading, which may be of independent interest.
Recall that in the (1-dimensional)
\emph{fractional cascading} paradigm of Chazelle and
Guibas~\cite{cg-fc1ds-86,cg-fc2a-86}, 
one searches a collection of sorted lists (of what are essentially numbers),
which are called \emph{catalogs},
that are stored along nodes in a search path of a catalog graph, $G$,
for the same element, $x$.
In \emph{multidimensional fractional cascading}, one instead searches 
a collection of finite subsets of $\R^d$
for the same point, $p$, along nodes in a search path of 
a catalog graph, $G$.
In our case, rather than have each catalog represented 
as a one-dimensional sorted list, we instead represent each 
catalog as a multidimensional ``sorted list,'' with points ordered as they would
be visited in a space-filling curve (which is strongly related to how the
points would be organized in a quadtree, e.g., 
\ifFull
see~\cite{Samet1984,Bern1993}).
\else
see~\cite{Bern1993}).
\fi{}

By then sampling in a fashion inspired by one-dimensional fractional cascading,
we show\footnote{The details for our constructions 
	are admittedly intricate, so some details
	of proofs are given in appendices.}
how to efficiently perform
repeated searching of multidimensional catalogs stored at the nodes of a 
search path in a suitable catalog graph, such as a
segment tree (e.g., see~\cite{ps-cgi-90}), with each of the searches
involving the same $d$-dimensional point or region. 

Although it is well known that space-filling curves can be applied to the
problem of approximate nearest neighbor searching, we are not aware of any
extension of space-filling curves to approximate range reporting. Furthermore,
we believe that we are the first to leverage space-filling curves in order to
extend dynamic fractional cascading into a multi-dimensional problem.

\section{A General Approach to Retroactivity}
\label{sec:strategy}
Recall that a query $Q$ is 
\emph{decomposable} if there is a binary operator 
$\square$ (computable in constant time) such that 
$Q(A\cup B) = \square(Q(A), Q(B))$. 
%
%
%
Demaine {\it et al.}~\cite{Demaine2007} showed that we can make any
decomposable search problem retroactive by maintaining each element ever
inserted in the structure as a line segment along a time dimension between the
element's insertion and deletion times.  Thus each point $p$ in $R^d$ 
is now represented by a line segment parallel to the time-axis in $R^{d+1}$
dimensions.  For example when extending the query to be fully-retroactive, a
one-dimensional successor query becomes a two-dimensional vertical ray shooting
query, and a one-dimensional range reporting query becomes a two-dimensional
orthogonal segment intersection 
\ifFull
query 
(Figure \ref{fig:retroSuccessor}).
\begin{figure}
\centering
\includegraphics[width=.95\textwidth]{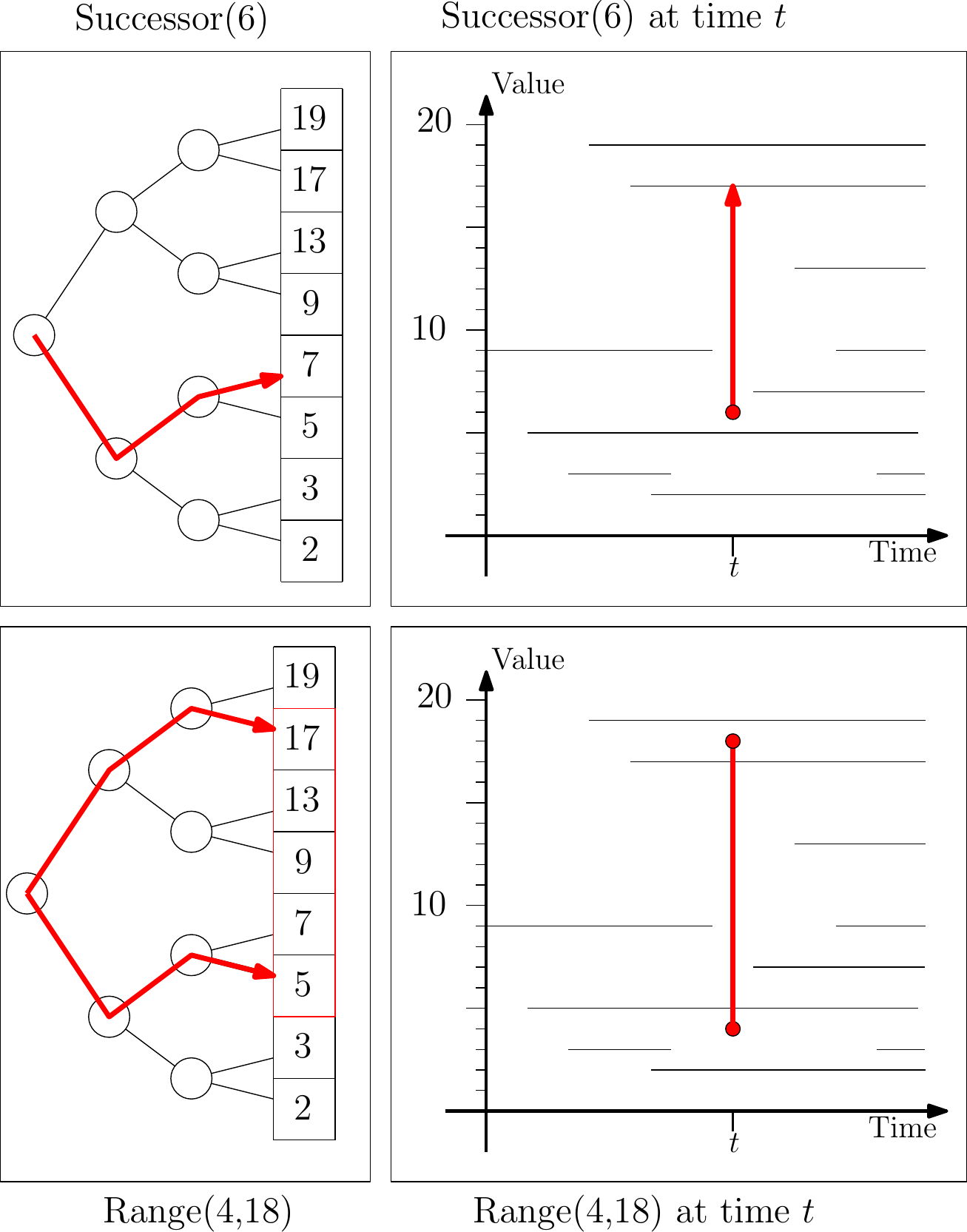}
\caption{
\label{fig:retroSuccessor}
Examples of how a $d$ dimensional query on points becomes a $d+1$
dimensional query on segments. Here we illustrate how a one-dimensional
successor query becomes a two dimensional vertical ray shooting query, and a 
one-dimensional range reporting query becomes a two-dimensional segment intersection
query.}
\end{figure}
\else
query.
\fi

Thus, we maintain a segment tree to allow searching over the segments in the
time dimension, and augment each node of the segment tree with a secondary
structure supporting our original query in $d$ dimensions. Let $S$ be the set of
nodes in the segment tree on a root-to-leaf
path searching down for $t$ in the time dimension.  To answer a
fully-retroactive query,  we perform the same $d$-dimensional query at each node
in $S$.  This transformation costs an extra $\log n$ factor
in space, query time, and update time, which we would nevertheless like to
avoid. 

Recall that Mortensen \cite{Mortensen2003} and Giora and Kaplan \cite{Giora2009}
both solve the fully-retroactive versions of decomposable search
problems, and are both able to avoid the extra $\log n$ factor in query and update
time.  Therefore inspired by their techniques, we propose the following as a
general strategy for optimally solving the 
fully-retroactive version of any decomposable
search problem.  
\begin{enumerate}[1.]
\item Suppose we have an optimal data structure $D$ for the non-retroactive problem which supports
queries in polylogarithmic time $T(n)$.
\item Represent each $d$-dimensional point as a line segment in $d+1$
dimensions.
\item Build a weight-balanced segment tree with polylogarithmic branching factor
over the segments as described by \cite{Giora2009}.
\item Augment the root of the segment tree with an optimal search structure
$D$.
\item Augment each node of the segment tree with a colored dynamic fractional cascading
(CDFC) data structure.    
\item Perform a retroactive query at time $t$ 
by performing the non-retroactive query on the non-retroactive data structure at
the root of the segment tree, and for each node on the search path for $t$ in
the segment tree, perform the query in each of the CDFC structures 
(Figure \ref{fig:AugSegTree}).
\end{enumerate}

The CDFC data structure must be cleverly tuned to support a \emph{colored} 
(but non-retroactive) version of the $d$-dimensional query in $O(T(n) \cdot \log
\log n /\log n)$
time. In a colored query, each element in the structure is given a color, and
the query also specifies a set of colors. We only return elements whose color is
contained in the query set.
The colors are required because the segment tree has high degree. Each color
represents a pair of children of the current node in the segment tree. Thus we
encode which segments overlap the search path via their colors.
Since the segment tree has a polylogarithmic branching factor, we spend $T(n)$
time searching at the root and $O(T(n) \cdot \log \log n / \log n)$ time
searching in the CDFC structures at each of the $O(\log \log n / \log n)$ nodes.
Therefore, the total time required by a query is an optimal $O(T(n))$.
Updates follow a similar strategy, 
but may require us to periodically rebuild sections of the segment tree.
We can still achieve the desired (amortized) update time, and the analysis
closely follows \cite{Giora2009}.
\begin{figure}[hbt!]
\centering
\includegraphics[width=.8\textwidth]{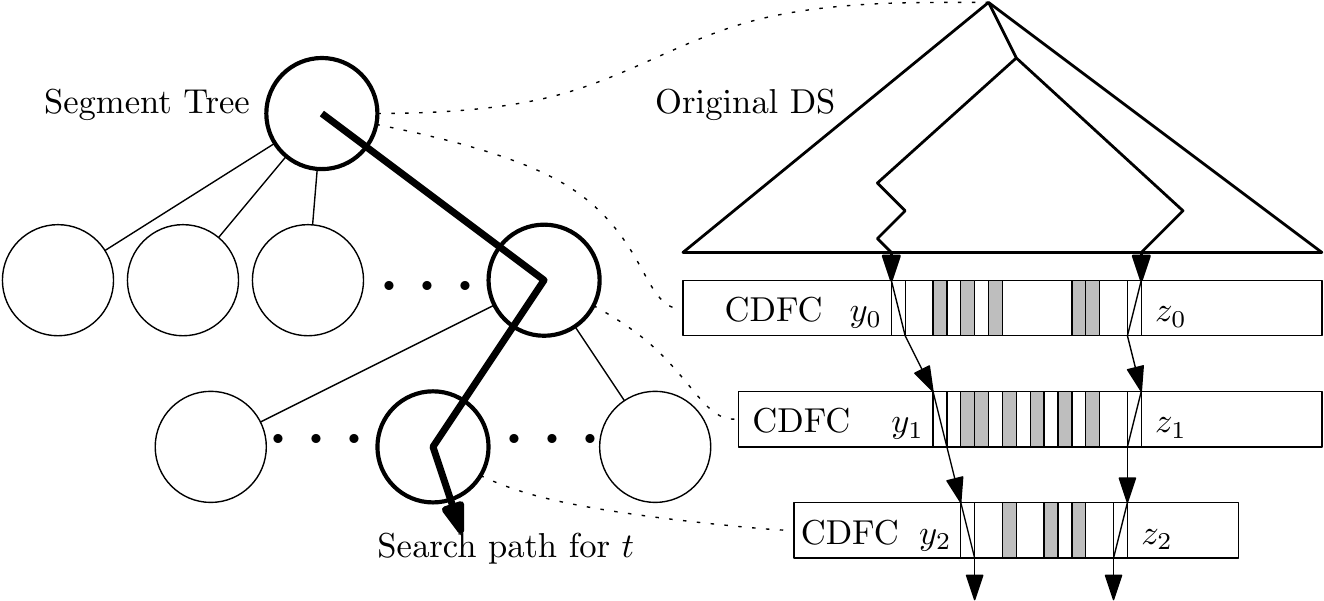}
\caption{
    \label{fig:AugSegTree}
   An application of our strategy to a fully-retroactive one-dimensional range
reporting query. Shaded elements represent elements with colors indicating they are present at time $t$.
} 
\end{figure}

One of the key difficulties in applying this strategy lies in the
design of the colored dynamic fractional cascading data structure, especially
in problems where the dimension $d > 1$. 
In fact, the authors are not aware of any previous application 
of dynamic fractional cascading techniques to any multidimensional search problem.
However, in the following we show how techniques using space filling curves can
be applied to extend the savings of fractional cascading into a multidimensional
domain. First, we apply the above strategy in the simpler case when
$d = 1$. Then we extend this result using space filling curves to support
Fully-Retroactive range reporting queries and approximate nearest neighbor
queries in $\R^d$.
Note that in one dimension a nearest
neighbor query reduces to a successor and predecessor query.
\begin{lemma}
\label{lem:CDFC}
There exists a colored dynamic fractional cascading data structure which
supports updates in $O(\log \log N)$ amortized time, 
colored successor and predecessor queries in $O(\log \log N)$ worst case time and 
colored range reporting queries in $O(\log \log N + k)$ worst case time, where
$N$ is the number of elements stored and $k$ is the number of elements reported.
\end{lemma}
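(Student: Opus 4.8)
The plan is to combine the standard dynamic fractional cascading machinery of Mehlhorn and Näher (as used by Giora and Kaplan) with a colored augmentation layered on top of a balanced search structure for the catalog graph. First I would set up the underlying object: a catalog graph of bounded degree whose nodes each hold a sorted catalog of one-dimensional keys, where each key also carries a color drawn from a set whose size is polylogarithmic (recall from Section~\ref{sec:strategy} that the colors encode which pairs of segment-tree children a segment overlaps). For the uncolored skeleton I would use the bridge-based dynamic fractional cascading of Mehlhorn--Näher: augmented catalogs at each node sample a constant fraction of neighboring augmented catalogs, bridges link identical sampled elements across edges, and each augmented catalog is stored in a balanced BST (or a finger-search structure), so that once we have located a key in one augmented catalog we can locate it in an adjacent one in $O(\log\log N)$ time by walking to the nearest bridge and following it. This yields the uncolored successor/predecessor bound of $O(\log\log N)$ per node traversed and $O(\log\log N)$ amortized per insertion/deletion, exactly as in Giora--Kaplan.

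Next I would handle the colors. The key idea is to build, for each node of the catalog graph, a secondary structure over its augmented catalog that is indexed by color: conceptually a collection of ``slabs,'' one per color, but organized so that a colored successor query --- given a key $x$ and a set $C$ of admissible colors --- can be answered in $O(\log\log N)$ total, not $O(|C|\log\log N)$. Because $|C|$ is only polylogarithmic and the target per-node cost is $O(\log\log N)$, I would instead maintain the catalog as a single sorted list and attach to each element a pointer to the nearest preceding (and following) element of each color, but that is too expensive to update; the better route, following the segment-tree-of-colors trick, is to store at each node a balanced tree over the catalog whose internal nodes are annotated with the set of colors appearing in their subtree, pruned so the annotation has size $O(\log\log N)$-ish, letting a query for color-set $C$ descend only into subtrees that intersect $C$. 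Crucially, the colors of interest at each segment-tree node come in a structured family (each color = a pair of children, and the admissible set along a search path is an interval/contiguous family), so $C$ is not arbitrary: I would exploit that structure to make the colored successor query reduce to a constant number of ordinary fractional-cascading searches, restricted to the relevant color-classes, so the $\log\log N$ cost per node is preserved. For range reporting, once the colored successor of the left endpoint is found in a node's catalog, I walk rightward outputting only elements whose color lies in $C$; using the color-annotated balanced tree I can skip maximal runs of disallowed colors, charging $O(1)$ to each reported element plus the initial $O(\log\log N)$, giving $O(\log\log N + k)$.

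Then I would verify the update bounds. Insertions and deletions of a colored key trigger (i) the usual fractional-cascading resampling/bridge maintenance, which is $O(\log\log N)$ amortized by the Mehlhorn--Näher analysis combined with the weight-balanced rebuilding argument of Giora--Kaplan, and (ii) updates to the per-node color-annotated trees, which is $O(\log\log N)$ worst case per affected augmented catalog; since a single update touches $O(1)$ augmented catalogs (plus amortized resampling), the total stays $O(\log\log N)$ amortized. I expect the main obstacle to be exactly the colored-query step: making the query cost depend on $\log\log N$ rather than on the number of distinct colors, while still supporting $O(\log\log N)$-time updates to the color structure. The resolution relies on the special (interval-like) structure of the color sets that arise from the weight-balanced segment tree --- a query never asks for an arbitrary subset of colors but for one coming from a bounded-size family --- so that the colored successor/predecessor query decomposes into $O(1)$ monochromatic or contiguous-color searches, each handled by an ordinary dynamic-fractional-cascading search. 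Once that reduction is in place, the range-reporting and update bounds follow by the standard charging arguments, and the lemma follows.
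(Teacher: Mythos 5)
Your approach is genuinely different from the paper's, and it has a real gap in the part you yourself flag as the main obstacle. The paper does \emph{not} realize the CDFC structure via Mehlhorn--N\"aher bridge-based dynamic fractional cascading at all. Instead (Appendices~\ref{app:gusf} and~\ref{app:gveb}) it builds a Generalized Union-Split-Find structure: the list is partitioned into blocks of size $\Theta(\log^\beta n)$ whose order is maintained with Willard's order-maintenance labels, each block stores a small color-annotated BST with word-packed \texttt{leafs} arrays, and the block roots are stored in a Generalized van Emde Boas tree that keeps per-color min/max arrays, q-heaps, and an $O(M)$-size lookup table so that a colored successor or range-reporting step costs $O(1)$ per level of recursion and $O(\log\log N)$ overall. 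The $O(\log\log N)$ bound comes from the vEB recursion depth plus word-level parallelism over the $|C|=\log^{1/4}n$ colors, not from bridge-hopping between adjacent catalogs.

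The gap in your proposal is the colored-successor step. A balanced BST over a catalog of $m$ elements annotated with color sets at internal nodes lets you prune, but a root-to-leaf descent still costs $O(\log m)$, not $O(\log\log N)$; nothing in your sketch explains how the ``pruned to $O(\log\log N)$-ish'' annotation brings the depth down. To get $O(\log\log N)$ you would have to replace the balanced tree by a van Emde Boas-type recursion anyway, which is exactly the GVEB machinery the paper introduces. Your proposed workaround --- that the query color sets $C_q$ are interval-like so the colored search decomposes into $O(1)$ contiguous-color DFC searches --- is not justified and is in fact false for the sets actually used: $F(v)[u]$ collects all pairs $c(u,w)$ over $w\neq u$ together with $c_\ell(u), c_r(u), c(u,u)$ (a ``row plus column'' of the pair matrix), and $Q(v)[u]$ collects $c(\ell,r)$ over all $\ell<u<r$, neither of which is a contiguous range under any single linear order of colors. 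The paper sidesteps this by exploiting that $|C|$ fits in a fraction of a word, encoding $C_q$ as a bitmask and using q-heaps and precomputed lookup tables to do set intersection and rank in $O(1)$ time per recursion level; if you want your route to go through, that word-packing/table-lookup idea (or an explicit argument that the specific $C_q$ sets decompose into $O(1)$ monochromatic searches) is the missing ingredient, and once you add it you have essentially rebuilt GVEB rather than classical bridge-based cascading.

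Your range-reporting step (follow same-color predecessor/successor pointers after one colored search) does match the paper's \texttt{reportany}-then-walk strategy, and your update accounting is plausible in outline, but both depend on the unresolved $O(\log\log N)$ colored-successor primitive.
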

\begin{proof}
We extend the generalized union-split-find structure of 
\cite{Giora2009} to also support 
colored range queries. See Appendix~\ref{app:gusf}.
\end{proof}
\paragraph{Space Filling Curves.}
The z-order, due to Morton~\cite{Morton1966}, is commonly used to map multidimensional points down to one dimension.
 By now space filling curves are well-studied
and multiple authors have applied them specifically to 
 \ifFull
quadtrees (e.g., see~\cite{Samet1984,Bern1993}) and
\else
\fi{}
approximate nearest neighbor queries~\cite{Derryberry2008,Chan2006,Chan1998,Liao2001}.
However, we extend their application to approximate range searching as well. 
Furthermore, we believe that we are the first to leverage space-filling curves to extend dynamic fractional
cascading techniques to multidimensional problems such as these.
\begin{lemma}
\label{shuffleOrder}
The set of points in any quadtree cell rooted at $[0, 1)^d$ form
a contiguous interval in the z-order.
\end{lemma}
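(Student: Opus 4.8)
The plan is to argue directly from the bit‑interleaving definition of the z‑order, reducing the statement to the elementary fact that a dyadic cube corresponds, under z‑order, to a dyadic interval. Recall that a quadtree cell rooted at $[0,1)^d$ at depth $\ell$ is a dyadic cube $C=\prod_{i=1}^{d}[a_i 2^{-\ell},(a_i+1)2^{-\ell})$ for integers $0\le a_i<2^{\ell}$; writing $a_i$ in binary as $a_i=\sum_{j=1}^{\ell}c_{i,j}2^{\ell-j}$, a point $p=(p_1,\dots,p_d)\in[0,1)^d$ lies in $C$ exactly when, for every coordinate $i$, the first $\ell$ bits of the binary expansion of $p_i$ are $c_{i,1},\dots,c_{i,\ell}$. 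To make ``the binary expansion'' unambiguous I would first fix the convention that no expansion ends in an infinite run of $1$'s (so that $1/2$ is written $0.1000\cdots$); this makes the expansion of every element of $[0,1)$ unique and makes the z‑order map $z\colon[0,1)^d\to[0,1)$ well defined.

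Next I would observe that, by the definition of the z‑order, the bit of $z(p)$ in position $(j-1)d+i$ equals the $j$‑th bit of $p_i$. Hence the first $d\ell$ bits of $z(p)$ are precisely the interleaving of the first $\ell$ bits of each coordinate, and the condition ``$p\in C$'' from the previous paragraph is \emph{equivalent} to the statement that the first $d\ell$ bits of $z(p)$ equal a fixed binary string $w$ depending only on $C$. Letting $W$ be the integer whose binary representation is $w$, and using the no‑trailing‑ones convention, this is the same as saying $z(p)\in[W2^{-d\ell},(W+1)2^{-d\ell})$, a half‑open dyadic interval $I_C$ of length $2^{-d\ell}$.

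Finally, since membership of a point in $C$ is equivalent to its z‑value lying in the interval $I_C$, the set of points stored in the quadtree that lie in $C$ is exactly the set of stored points whose z‑values fall in $I_C$; when the stored points are listed in sorted z‑order, such a set occupies a contiguous stretch of the list, which is the claimed statement. (Coincident points share a z‑value and may be ordered arbitrarily among themselves without affecting contiguity.)

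The argument is short, and the only delicate point — hence the step I expect to need the most care — is the bookkeeping about binary representations: one must make sure the ``first $d\ell$ bits equal $w$'' condition is genuinely equivalent to membership in the half‑open dyadic interval $I_C$, with no boundary point double‑counted. This is exactly why I fix the no‑trailing‑ones convention up front; with that convention in place, both equivalences above are immediate. Note also that we never need $z$ to be injective or surjective on all of $[0,1)^d$: only the single equivalence $p\in C\iff z(p)\in I_C$ is used, and it holds for the concrete data points regardless of any such issues.
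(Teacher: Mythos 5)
Your proof is correct and is essentially the same argument as the paper's: both identify a quadtree cell at depth $\ell$ with a fixed length-$d\ell$ prefix of the shuffled (z-order) key, and conclude that the cell's points occupy a contiguous z-order range. The paper states this informally for $d=2$ and finishes by contradiction (a point outside the cell must differ in the prefix and hence fall before or after the block), whereas you make the same observation slightly cleaner by exhibiting the image of the cell explicitly as the dyadic interval $[W2^{-d\ell},(W+1)2^{-d\ell})$ and fixing the no-trailing-ones binary convention, but the underlying idea is identical.
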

\begin{proof}
Due to Bern {\it et al.}~\cite{Bern1993}. See Appendix~\ref{app:space} 
for details.
\end{proof}
\begin{lemma}
\label{cANNthm}
Let $P$ be a set of points in $\R^d$. 
Define a constant $c = \sqrt{d}(4d+4) + 1$. 
Suppose that we have $d+1$ lists $P + v^{j}, j= 0, \ldots, d$, each one sorted according to its z-order.  We can find a query point $q$'s $c$-approximate nearest neighbor in $P$ by examining the the $2(d+1)$ predecessors and successors of $q$ in the lists. 
\end{lemma}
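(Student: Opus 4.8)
The plan is to show that among the $2(d+1)$ points examined, the one closest to $q$ is a $c$-approximate nearest neighbor. Let $p^{*}$ be the true nearest neighbor of $q$ in $P$ and set $r=\|q-p^{*}\|$. It suffices to exhibit a single index $j\in\{0,\dots,d\}$ for which the predecessor or the successor of $q+v^{j}$ in the $j$-th (z-ordered) list is a point $\hat p+v^{j}$ with $\hat p\in P$ and $\|q-\hat p\|\le c\,r$; then the closest of all $2(d+1)$ candidates is at least that good, and we simply return it. Throughout I assume, as in the construction, that $P$, the query, and all shifted copies lie in $[0,1)^{d}$, so that the quadtree cells of Lemma~\ref{shuffleOrder} are available.

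The heart of the argument is a shifting lemma: for the shift vectors $v^{j}$ used in the construction (equally spaced along the main diagonal), and for every $x\in\R^{d}$ and every radius $\rho$, there is a power of two $L$ with $2(d+1)\rho\le L<4(d+1)\rho$ and an index $j$ such that the translated ball $B(x,\rho)+v^{j}$ lies inside a single quadtree cell of side $L$. I would prove this by a pigeonhole count. The ball is contained in the axis-aligned box $\prod_{k}[x_{k}-\rho,\,x_{k}+\rho]$, whose side $2\rho$ is at most $L/(d+1)$. Fix a coordinate $k$: as $j$ ranges over $\{0,\dots,d\}$, the shift offsets in coordinate $k$ are $d+1$ values equally spaced by $L/(d+1)\ge 2\rho$ within one period $L$ of the side-$L$ grid, so an interval of length $2\rho$ can straddle a grid hyperplane for at most one value of $j$. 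There are $d$ coordinates but $d+1$ choices of $j$, so some $j$ avoids straddling in every coordinate, and for that $j$ the translated box — hence the translated ball — sits inside one cell of side $L$. (If $2(d+1)\rho$ exceeds the side of the root, take the root cell $[0,1)^{d}$ itself with $j=0$; since then $r>1/(2(d+1))$ and $cr>2\sqrt d\ge\sqrt d$, this degenerate case only improves the final bound.)

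Now apply the shifting lemma with $x=q$ and $\rho=r$: it yields an index $j$ and a cell $C$ of side $L<4(d+1)r$ containing both $q+v^{j}$ and $p^{*}+v^{j}$. By Lemma~\ref{shuffleOrder}, the points of $P+v^{j}$ lying in $C$ form a contiguous block of the $j$-th list, and this block consists of exactly those points whose z-code lies in the z-range of $C$. Since $q+v^{j}\in C$, its z-code lies in that range and splits it; since $p^{*}+v^{j}\in C$ lies on one side of (or at) this split, at least one of the predecessor and successor of $q+v^{j}$ in the list has z-code still within the z-range of $C$, hence is a point $\hat p+v^{j}$ with $\hat p\in P$ and $\hat p+v^{j}\in C$. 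Then $\hat p$ and $q$ both lie in the side-$L$ cube $C-v^{j}$, so $\|q-\hat p\|\le\sqrt d\,L<\sqrt d(4d+4)\,r<c\,r$, which is exactly the claim; the additive $+1$ in $c$ absorbs the floor used in the actual shift amounts and the fact that $q+v^{j}$ need not be a grid point.

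I expect the main obstacle to be the shifting lemma itself: choosing the shift vectors so that the ``at most one bad $j$ per coordinate'' count is rigorously valid (including boundary and rounding effects), and then verifying that it meshes cleanly with Lemma~\ref{shuffleOrder} so that the z-order predecessor/successor of the shifted query genuinely lands inside the common cell $C$. The remaining ingredients — the diameter bound for two points in a cube of side $L$, and taking the best of the $2(d+1)$ reported candidates — are routine.
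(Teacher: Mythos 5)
Your proposal follows essentially the same route as the paper: find a shift $v^{j}$ so that one small quadtree cell contains both the shifted query and the shifted true nearest neighbor, invoke Lemma~\ref{shuffleOrder} to conclude that the z-order predecessor or successor of the shifted query also lands in that cell, and bound the diameter of the cell by $\sqrt{d}(4d+4)\,r$. The only structural difference is that the paper cites the shifting step (Chan's Lemma~\ref{rCentral} together with the Liao--L\'opez corollary stated just after it) while you reprove it inline via a pigeonhole count, which is in fact how those cited lemmas are proved.

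There is one step in your inline shifting lemma that needs an explicit hypothesis. The claim that the offsets $j/(d+1) \bmod L$, $j=0,\dots,d$, are ``equally spaced by $L/(d+1)$ within one period $L$'' holds because multiplication by $L^{-1}=2^{\ell}$ permutes the residues modulo $d+1$, and that requires $\gcd(d+1,2)=1$, i.e.\ $d$ even. For $d=3$ and $L=1/2$ the residues $0,\tfrac14,\tfrac12,\tfrac34 \bmod \tfrac12$ collapse to $0,\tfrac14,0,\tfrac14$, so two shifts coincide and the ``at most one bad $j$ per coordinate'' count breaks down ($2d$ possible bad choices exceeding $d+1$). This is exactly why the paper's Lemma~\ref{rCentral} carries the hypothesis ``Suppose $d$ is even,'' a hypothesis the final theorem statement quietly drops; the standard remedy is to pad an odd $d$ up to the next even dimension. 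You should state the parity condition (or the padding) explicitly when invoking the equal-spacing fact.
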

\begin{proof}
See Appendix~\ref{app:space}.
\end{proof}
\section{Main Results}
In this section we give our primary results: data structures for
fully-retroactive approximate range queries and fully-retroactive approximate
nearest neighbor (ANN) queries.
Recall that an approximate range query $report(q,r,\epsilon, t)$ defines an
inner range $Q^-$, the region within a radius $r$ of the query point $q$  and an
outer range $Q^+$, the area within a radius of $(1+\epsilon)r$ of $q$. We  want
to return all the points inside $Q^-$ and exclude all points outside $Q^+$ for a
particular point in time $t$. Points that fall between $Q^-$ and $Q^+$ at time
$t$ may or may not be reported. Points not present at time $t$ will never be
reported.
\begin{theorem}[Fully-Retroactive Approximate Range Queries]
For any positive constant $d\ge 1$, we can maintain  a set of $n$ points in
$\R^d$ indexed by time such that we can perform insertions or deletions at any
point in the timeline in $O(\log n)$ amortized time. We support 
for any small
constant $\epsilon>0$, $(1+\epsilon)$-approximate range reporting queries at any
point in the timeline in $O(\log n + k)$ time, where $k$ is the output size.
The space required by our data structure is $O(n \log n /
\log \log n)$. 
\end{theorem}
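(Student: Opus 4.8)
The plan is to instantiate the general six-step strategy from Section~\ref{sec:strategy} with the $d$-dimensional approximate range reporting problem as the underlying (non-retroactive) query. First I would fix the space-filling-curve machinery: by Lemma~\ref{shuffleOrder}, the points in any quadtree cell form a contiguous interval in the z-order, so an approximate range query for a ball $B(q,r)$ reduces to identifying $O(1)$ quadtree cells of side length $\Theta(\epsilon r)$ that together cover $B(q,r)$ and are contained in $B(q,(1+\epsilon)r)$, and then reporting every point whose z-order value lies in the corresponding $O(1)$ intervals. To make this robust to the grid alignment (a ball may not be cleanly covered by cells of a single shifted grid), I would use the standard trick of maintaining $d+1$ shifted copies of the point set, $P+v^j$ for $j=0,\dots,d$, each sorted in z-order — exactly the lists already used in Lemma~\ref{cANNthm} — so that for any query ball at least one shift yields a clean $O(1)$-cell cover. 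Thus the non-retroactive approximate range query becomes $O(1)$ successor searches plus $O(1)$ range-reporting scans over sorted-in-z-order lists, which is the one-dimensional primitive that fractional cascading is designed to accelerate.

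Next I would carry out the retroactive reduction. Each point, together with its insertion and deletion times, becomes a horizontal segment in the time dimension; build the weight-balanced segment tree with polylogarithmic branching factor $b = \Theta(\log^\delta n)$ over the time intervals as in~\cite{Giora2009}. The root is augmented with an ordinary (non-retroactive) approximate range structure — e.g.\ a skip quadtree~\cite{egs-sqsdd-05} or simply the $d+1$ sorted z-order lists with a balanced BST — giving $T(n) = O(\log n)$. Every internal node is augmented with the colored dynamic fractional cascading structure of Lemma~\ref{lem:CDFC}, where the catalogs are the $d+1$ z-order-sorted lists of the points whose time-segments are stored at that node, and colors encode which pair of children of that node a segment straddles (so that only the segments actually alive along the search path for $t$ are reported). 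A retroactive query $report(q,r,\epsilon,t)$ then does: one $O(\log n)$ query at the root, and at each of the $O(\log_b n) = O(\log n/\log\log n)$ nodes on the root-to-leaf path for $t$, an $O(1)$-fold colored successor/range query which — thanks to the fractional cascading linking the catalogs of parent to child — costs $O(\log\log n + k_v)$ per node after the first. Summing, the query time is $O(\log n) + O(\log n/\log\log n)\cdot O(\log\log n) + O(k) = O(\log n + k)$. For updates (insert/delete a point, or retroactively change a point's appearance/disappearance time), I would insert or delete the corresponding segment, touching $O(\log_b n)$ nodes and performing $O(1)$ CDFC updates per node at $O(\log\log n)$ amortized each, for $O(\log n)$ total; periodic rebuilding of unbalanced subtrees is charged as in~\cite{Giora2009}, preserving the amortized bound. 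Space is $O(n)$ per level of the segment tree times $O(\log_b n)$ levels, i.e.\ $O(n\log n/\log\log n)$, since the CDFC augmentation is linear in catalog size.

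The main obstacle — and the step that needs genuine care rather than citation — is designing the colored dynamic fractional cascading structure so that it supports the z-order range-reporting primitive, not just successor/predecessor, while still achieving $O(\log\log n + k)$ per node and tolerating the high branching factor of the segment tree. Lemma~\ref{lem:CDFC} asserts this, and the plan is to rely on it: the delicate points are (i) that a range query in the catalog at a node, once located via the fractional-cascading bridges from the parent, can be continued into the relevant child catalog in $O(\log\log n)$ without re-searching, and (ii) that the color filtering (which selects the $O(1)$ colors corresponding to children on the search path) does not blow up the per-element reporting cost — this requires the colored union-split-find machinery of Appendix~\ref{app:gusf} to segregate elements by color so that reporting is output-sensitive. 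The second subtlety is purely geometric and is handled by Lemma~\ref{shuffleOrder} together with the $(d+1)$-shift argument: one must verify that the $O(1)$ z-order intervals produced for the best shift lie inside $B(q,(1+\epsilon)r)$ and cover $B(q,r)$, which follows from choosing the quadtree resolution level so that each cell has diameter at most $\epsilon r/\sqrt d$ and using the packing bound that $O(1)$ such cells suffice — a bounded-aspect-ratio argument in the spirit of~\cite{am-ars-00,dgk-bartc-01}. Everything else (the segment tree balancing, the amortized update analysis, the root structure) is a direct transcription of the framework in Section~\ref{sec:strategy} and~\cite{Giora2009}.
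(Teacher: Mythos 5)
Your proposal is essentially the paper's proof: segment tree over time intervals with polylogarithmic branching, a non-retroactive structure at the root, CDFC catalogs in z-order at each internal node (Lemma~\ref{lem:CDFC}), reduction of a $d$-dimensional cell query to a $1$-dimensional z-order interval via Lemma~\ref{shuffleOrder}, and the standard cascade down the search path for $t$. The query- and update-time accounting and the space bound all match.

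The one place you diverge is not wrong but is unnecessary and slightly confused: you import the $(d+1)$-shift machinery of Lemma~\ref{cANNthm} into the range-reporting argument, on the grounds that ``a ball may not be cleanly covered by cells of a single shifted grid.'' That concern does not actually arise for range reporting. At any single (unshifted) quadtree level $\ell$ with $2^{-\ell}\sqrt d\le\epsilon r$, the cells meeting $B(q,r)$ already cover $B(q,r)$, lie inside $B(q,(1+\epsilon)r)$, and number $O(\epsilon^{-d})=O(1)$; the shifts are only needed to trap a point and its nearest neighbor inside a \emph{single} not-too-large cell, which is exactly why the paper reserves them for the ANN corollary and uses a single z-order per CDFC node in the range theorem. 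Keeping $d+1$ catalogs per node costs you only a constant factor in space, but if you were to query more than one shift you would also need to deduplicate reported points. The other difference is that the paper commits to a skip-quadtree at the root and invokes Eppstein {\it et al.}'s algorithm to extract $O(\epsilon^{1-d})$ inner cells together with their delimiting leaves $y_0,z_0$, whereas you hedge between a skip-quadtree and plain sorted lists with a BST; either gives the $O(\log n)$ root cost, but the skip-quadtree route is what the paper actually uses and gives a cleaner cell count.
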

\begin{proof} 
We follow the general strategy outlined in Section
\ref{sec:strategy}. We augment the root of the segment tree with a
skip-quadtree, an optimal structure for approximate range and nearest
neighbor queries in $\R^d$. We also augment each node of the segment tree with a
specialized CDFC structure which we now describe. 

We extend the CDFC structure from Lemma \ref{lem:CDFC} to maintain
$d$-dimensional points such that given a query set of colors $C_q$ and  
$d$-dimensional quadtree cell, it returns all points contained in that
cell with colors in $C_q$.
By Lemma \ref{shuffleOrder}, for all points $y,q,z$ such that $y < q < z$ in the z-order, any
quadtree cell containing $y$ and $z$ must also contain $q$. Furthermore, for a given
quadtree, each cell is uniquely defined by the leftmost and rightmost leaves in
the cell's subtree. Therefore, the $d$-dimensional cell query reduces to a
one-dimensional range query in the z-order on the unique points which define the
quadtree cell (Figure \ref{fig:zorder}). Thus, by leveraging Lemma
\ref{lem:CDFC} and maintaining the points according to their z-order, we support
the required query in $O(\log \log n + k)$ time.  

 
The skip-quadtree tree contains all the points ever
inserted into our data structure, irrespective of the time that they were
inserted or deleted. The inner and outer range of a query partition the set of
quadtree cells into the three subsets: the \textit{inner} cells, which are
completely contained in $Q^+$, the \textit{outer cells}, which are completely
disjoint from $Q^-$, and the stabbing cells, which partially overlap $Q^+$ and
$Q^-$.  Eppstein {\it et al.}~\cite{Eppstein2005} showed that a skip-quadtree can
perform an approximate range query in $O(\log n + \epsilon^{1-d})$ time expected
and with high probability, or worst case time for a deterministic skip-quadtree.
Using their algorithm we can find the set $I$ of $O(\epsilon^{1-d})$ disjoint
inner cells in the same time bound. Based on the correctness of their algorithm
we know that the set of points we need to report are covered by these cells. We
report the points present at time $t$ as follows: For each cell $i \in I$, if
$i$ is a leaf, we report only the points in $i$,
which are present at time $t$ in
constant time. Otherwise, we find the first and last leaves $y_0$ and
$z_0$ according to an in-order traversal of $i$'s subtree in $O(\log n)$
time. Then we perform a fully-retroactive range query on cell $i$ in the segment
tree, using $z_0$ and $y_0$ to guide the query on cell $i$ in the CDFC structures. 
Correctness
follows since a point satisfies the retroactive range query if and only if it is in one
of the cells we examine and is present at time $t$.   

For each of the $O(\epsilon^{1-d})$ cells in $I$ we spend $O(\log n +
k_i)$ time and so the total running time is $O(\epsilon^{1-d} \log n + k) =
O(\log n + k)$, since $\epsilon$ is a small constant. The skip-quadtree is
linear space, and thus the space usage is dominated by the segment tree.
The total space required is $O(n \log n / \log \log n)$.
\end{proof}

\begin{corollary}[Fully-Retroactive ANN Queries.]
We can maintain, 
for any positive constant, $d\ge 1$, a set of $n$ points in
$\R^d$ indexed by time such that we can perform insertions or deletions at any
point in the timeline in $O(\log n)$ amortized time. We support, 
for any small
constant $\epsilon>0$, $(1+\epsilon)$-approximate nearest
neighbor queries for any point in the past or present in $O(\log n)$ time.
The space required by our data structure is $O(n \log n /
\log \log n)$. 
\end{corollary}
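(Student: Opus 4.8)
The plan is to instantiate the general strategy of Section~\ref{sec:strategy} in essentially the same way as in the preceding theorem. We keep the weight-balanced segment tree over the time-extended segments, the skip-quadtree augmenting its root, and the CDFC structure augmenting each node; in addition, at every node of the segment tree we maintain the $d+1$ shifted point sets $P+v^{j}$ of Lemma~\ref{cANNthm}, each stored in its own CDFC structure with the points ordered by z-order and colored according to which pair of children of that node their defining segment descends into. Since $d$ is a fixed constant this only multiplies the space and the work per update by a constant, so the bounds $O(n\log n/\log\log n)$ space and $O(\log n)$ amortized update time are inherited verbatim from the theorem (the update analysis still follows Giora and Kaplan~\cite{Giora2009}).

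First I would answer a \emph{constant-factor} retroactive ANN query. Given a query point $q$ and any time $t$, descend the segment tree along the root-to-leaf search path for $t$; this path has $O(\log n/\log\log n)$ nodes, and at each of them the set of colors that are ``active'' at time $t$ is determined. At each node on the path, and for each of the $d+1$ shifts, perform a colored predecessor query and a colored successor query for $q$ in the corresponding z-order, each costing $O(\log\log N)$ by Lemma~\ref{lem:CDFC}. For a fixed shift, the points present at time $t$ are partitioned among the active-color sublists of the path nodes, so taking the maximum of the returned predecessors and the minimum of the returned successors over all path nodes recovers the exact z-order predecessor and successor of $q$ among the points present at time $t$. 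Hence Lemma~\ref{cANNthm} applies to the point set restricted to time $t$, and the closest to $q$ among these $2(d+1)=O(1)$ aggregated candidates is a $c$-approximate nearest neighbor at time $t$ for $c=\sqrt{d}(4d+4)+1$. The total cost is $O(\log n/\log\log n)\cdot O(\log\log n)=O(\log n)$.

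The remaining, and only substantial, obstacle is to refine this constant-factor answer to a $(1+\epsilon)$-approximate one. Let $\tilde r$ be the distance from $q$ to the $c$-approximate neighbor found above, so the true nearest-neighbor distance at time $t$ obeys $r^{\ast}\le\tilde r\le c\,r^{\ast}$. Fix $\epsilon'=\Theta(\epsilon)$ with $(1+\epsilon')^{2}\le 1+\epsilon$, let $I=\lceil 2\log_{1+\epsilon'}c\rceil=O(1)$, and consider the radii $r_i=\tilde r\,(1+\epsilon')^{-i}$ for $i=0,1,\dots,I$. For $i=0,1,2,\dots$ run the retroactive $(1+\epsilon')$-approximate range reporting procedure of the preceding theorem at radius $r_i$ and time $t$, but halted as soon as it outputs a single point; since that procedure inspects only $O(\epsilon'^{1-d})=O(1)$ cells and spends $O(\log n)$ per cell, each such call costs $O(\log n)$, and there are only $O(1)$ of them. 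Because an approximate range query reports \emph{every} point within its inner radius, the call at $r_0=\tilde r\ge r^{\ast}$ is nonempty, while the call at $r_I\le\tilde r/c^{2}\le r^{\ast}/c$ is empty, since its outer radius $(1+\epsilon')r_I\le(1+\epsilon')r^{\ast}/c<r^{\ast}$ (using $c>1+\epsilon'$) encloses no point. Hence the scan reaches a first empty call at some index $i^{\ast}+1\le I$, with the call at $r_{i^{\ast}}$ nonempty. Emptiness at $r_{i^{\ast}+1}$ forces $r^{\ast}>r_{i^{\ast}+1}=r_{i^{\ast}}/(1+\epsilon')$, while the witness $p$ returned at $r_{i^{\ast}}$ lies in a cell contained in the outer range, so $\|p-q\|\le(1+\epsilon')r_{i^{\ast}}<(1+\epsilon')^{2}r^{\ast}\le(1+\epsilon)r^{\ast}$. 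Thus $p$ is a valid $(1+\epsilon)$-approximate nearest neighbor present at time $t$, and the whole query costs $O(\log n)$, which completes the proof.
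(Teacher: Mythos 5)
Your proof is correct and follows essentially the same two-stage plan as the paper: augment each segment-tree node with $d+1$ shifted z-order CDFC structures, aggregate colored predecessor/successor queries along the search path for $t$ to get a $c$-approximate neighbor via Lemma~\ref{cANNthm}, then refine using retroactive (approximate) range/emptiness queries. The only divergence is in the refinement step: the paper invokes the standard bisection search over $[r/c, r]$ (citing Arya, Mount, Har-Peled), which uses $O(\log(1/\epsilon))$ retroactive spherical emptiness queries, whereas you do a linear scan over $O(\log_{1+\epsilon}c) = O((\log c)/\epsilon)$ geometrically decreasing radii, halting at the first empty call. Your version is more self-contained and you work out the inner/outer-radius bookkeeping explicitly, at the cost of being exponentially worse in the $1/\epsilon$ dependence; since $\epsilon$ and $d$ are constants, both variants give $O(1)$ iterations and the same $O(\log n)$ total query time, so the corollary's bounds are unaffected.
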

\begin{proof}
By combining the data structure of \cite{Giora2009} with Lemma \ref{cANNthm}
and storing the $d$-dimensional points in that structure according to their
z-order, we already have a data structure for fully-retroactive $c$-approximate
nearest neighbor queries. However, $c$ is polynomial function of $d$, and for $d
= 2$, $c$ is already greater than 15.  In order to support $(1+ \epsilon)$
nearest neighbor queries for any $\epsilon > 0$, we require the data structure
of the previous theorem. 

We know from Lemma \ref{cANNthm} that we can find a
$c$-approximate nearest neighbor by using $d+1$ different shifts of the points.
Therefore, we augment our data structure so that instead of a single CDFC
structure at each segment tree node, 
we keep an array of $d+1$
CDFC structures corresponding to the z-order of each of the $d+1$ sets of
shifted points. Given a point $q$, we can find a the predecessor and successor
of $q$ at time $t$ in each of the $d+1$ z-orders in $O(d \log n)$ time. Out of
these $2(d + 1)$ points, let $p$ be the point that is closest to $q$. By
Lemma \ref{cANNthm}, $p$ is a $c$-approximate nearest neighbor. Let $r$ be the
distance between $p$ and $q$. As observed by multiple authors
\cite{Arya2010,Mount2010,Har-Peled2001}, we can find a
$(1+\epsilon)$-approximate nearest neighbor via a bisecting search over the
interval $[r/c, r]$. This search requires $O(\log(1/ \epsilon))$
fully-retroactive spherical emptiness queries. 
We can support a retroactive spherical emptiness query in
$O(\log n)$ time with only a slight modification to our retroactive approximate range query.
Instead of returning $k$ points in the range, we just return the first point we
find, or \texttt{null} if we find none. Thus the total time required is $O(d\log
n + \log(1/\epsilon)\log n) = O(\log n)$ since we assume $\epsilon$ and $d$ are
constant. The space usage only increases by a factor of $d$ when we store the
$d+1$ shifted lists, and thus the space required is still $O(n \log n /
\log \log n)$. 

\end{proof}



\ifFull
\subsection*{Acknowledgments}
This research was supported in part by the National Science
Foundation under grants 0847968 and 0953071 and by the Office
of Naval Research under (MURI) Award number N00014-08-1-1015.
\fi

\bibliographystyle{acm}
\bibliography{geom,goodrich,mybib}

\clearpage
\begin{appendix}
\section*{Appendices}

\section{Generalized Van Emde Boas (GVEB) Trees}
\label{app:gveb}

\begin{appthm}
A GVEB structure with parameters $(N,M)$ uses $O(NC)$ space and can be
initialized in $O(NC)$ worst case time, where $C = \log^\frac{1}{4}M$ is the
number of colors supported. It
requires a lookup table of size $O(M)$ and supports \texttt{insert},
\texttt{delete} and
colored successor (\texttt{find}) queries in $O(\log \log N)$ worst case time. It also
supports colored range reporting queries (\texttt{report}) in $O(\log \log N + k)$ time,
where $k$ is the number of elements reported. 
\end{appthm}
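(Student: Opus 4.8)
The plan is to build a colored refinement of the van Emde Boas recursion and then argue that adding colors is essentially free, because a set of $C = \log^{1/4} M$ colors fits in a single machine word. First I would fix the skeleton: a node responsible for a sub-universe of size $u$ keeps a summary structure over the $\sqrt{u}$ cluster indices together with $\sqrt{u}$ recursive cluster structures over $[\sqrt{u}]$, the recursion bottoming out at blocks of $O(\log M / C)$ positions stored explicitly in one word. To turn the usual $O(\log\log M)$ into the claimed $O(\log\log N)$ — and to hold the space at $O(N)$ rather than $O(M)$ — I would first apply an order-preserving rank / list-labeling reduction that replaces the $N$ keys by labels in a universe of size $N^{O(1)}$ (equivalently, a $y$-fast-trie-style bucketing into $\Theta(\log N)$-size groups so that only $O(N/\log N)$ representatives reach the recursive structure); after this, a root-to-leaf traversal has length $O(\log\log N)$.

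The key step is the color augmentation: give every node a $C$-bit field equal to the bitwise OR of the singleton masks of the colors of the keys in its subtree. Since $C = o(\log M) = o(w)$, this field is a single word, so intersecting it with a query color set $C_q$, or testing emptiness, is $O(1)$. This also lets me solve the slightly stronger problem in which a key may carry a \emph{set} of colors, which is exactly what the summary structures need, since a summary cluster index inherits the union of the colors of the block it summarizes. Colored successor and predecessor then run the ordinary van Emde Boas recursion, except that ``next non-empty cluster in the summary'' is replaced by ``next cluster whose color field meets $C_q$'' and ``minimum of a cluster'' by ``color-restricted minimum''; by the strengthened invariant each such sub-call is again an instance of the same problem one level down, so the standard analysis gives $O(\log\log N)$ worst case with no dependence on $|C_q|$. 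For the base-case block I would precompute a lookup table indexed by (the $O(\log M / C)\cdot C = O(\log M)$-bit block description, the query position, the $C$-bit query mask) returning the in-block colored successor in $O(1)$; this table has $O(M)$ entries, and it is precisely this constraint that forces $C$ down to $\log^{1/4}M$.

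Colored range reporting of $[a,b]$ restricted to $C_q$ proceeds by finding the colored successor of $a$ and then walking right, using the subtree color fields to skip in $O(1)$ any maximal subtree whose field misses $C_q$ or whose key-range misses $[a,b]$; a standard charging argument (each skipped or internal node charged to an adjacent reported key) gives $O(\log\log N + k)$. Updates are the usual van Emde Boas \texttt{insert} and \texttt{delete}, followed by recomputing the $C$-bit fields bottom-up along the one affected root-to-leaf path by OR-ing children, which is $O(\log\log N)$; combined with the group rebuilds from the bucketing of the first step this yields the claimed $O(\log\log N)$ \emph{amortized} bound, with the analysis mirroring Giora and Kaplan~\cite{Giora2009}. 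For space, each of the $O(N)$ keys, plus the $O(C)$-word per-color bookkeeping the summaries need in order to support the union-split-find operations required by Lemma~\ref{lem:CDFC}, contributes $O(C)$ words, for $O(NC)$ total; building the structure from sorted input is $O(NC)$, over and above the one-time, globally shared $O(M)$-size base table.

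I expect the main obstacle to be keeping the colored navigation \emph{worst-case} $O(\log\log N)$ rather than $O(|C_q|\log\log N)$: the naive design keeps one summary per color and probes each color of $C_q$ separately. The fix — carrying union masks at every node so that the ``next relevant cluster'' query is itself a single colored query one level down — requires stating and inducting on the strengthened invariant that keys carry color \emph{sets}, and carefully checking that the base-case table still fits in $O(M)$ for the chosen $C$. The order-preserving universe reduction of the first step is the other place to be careful, since it must interact cleanly with the worst-case query bound and with the colored bookkeeping.
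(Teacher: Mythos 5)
Your high-level skeleton (van Emde Boas recursion, colors packed into one machine word, an $O(M)$-size lookup table to handle base cases in $O(1)$) matches the paper's GVEB design, but there are two concrete gaps in the mechanism.

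\textbf{Colored navigation needs more than an OR-mask.} The standard vEB successor achieves $O(\log\log N)$ because, at each node, an $O(1)$ comparison against the stored \texttt{Max} of $x$'s cluster decides whether to recurse into that cluster or into the summary. In the colored setting the corresponding test is ``is there a key $\geq x$ with color in $C_q$ inside cluster$(x)$?'' Your per-node $C$-bit OR field only answers the weaker question ``does cluster$(x)$ contain \emph{any} color of $C_q$?''; it cannot tell you whether the matching key is $\geq x$, so you may be forced to recurse into the cluster \emph{and} fall back to the summary, which destroys the one-call-per-level recurrence and gives $\Theta(\log N)$, not $\Theta(\log\log N)$. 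The paper closes exactly this gap by storing, in every recursive node, the arrays \texttt{max-key}, \texttt{max-rank}, \texttt{min-key}, \texttt{min-rank} indexed by color, a q-heap (Fredman--Willard) over those $|C|=\log^{1/4}M$ entries, and a global $O(M)$-size table indexed by a \texttt{max-rank} array together with a color set $C_q$ that returns the colored maximum in $O(1)$. This per-color extremum machinery, not a per-node color bitmask, is what licenses the ``standard analysis gives $O(\log\log N)$'' step you invoke.

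\textbf{Colored reporting needs per-color adjacency pointers.} Your ``find colored successor of $a$, then walk right, skipping subtrees whose OR-mask misses $C_q$'' is plausible but the charging is not as clean as stated: locating each successive relevant element can cost $\Theta(\log\log N)$, and the skipped subtrees are not easy to pay for with adjacent output. The paper instead first runs a \texttt{reportany} query (find \emph{one} in-range element of each relevant color, using an auxiliary rank structure analogous to \texttt{max-rank} at each recursion level), and separately maintains, for every color $c$, predecessor/successor pointers linking consecutive elements of color $c$; after \texttt{reportany}, the remaining elements are reached in $O(1)$ apiece by following those pointers. The cost analysis then hinges on the observation that whenever the recursion descends into a substructure, at least two reportable elements lie inside, which is the argument you would need to make but do not.

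\textbf{Minor.} The order-preserving universe reduction you add is unnecessary here: in the paper $N$ already \emph{is} the key universe (keys lie in $[N]$), so the vEB recursion has depth $O(\log\log N)$ directly; the parameter $M$ only bounds the color count $C=\log^{1/4}M$ and the size of the shared lookup table used by the q-heaps.
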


The Generalized van Emde Boas tree (GVEB) is an extension of the data structure
by van Emde Boas {\it et al.}~\cite{VanEmdeBoas1976}.  Recall that a van Emde Boas
tree (VEB) is a recursive data structure that supports successor and predecessor
queries for a fixed (monochrome) universe $U$ in $O(\log \log U)$ time. A
recursive data structure is one that references itself over a smaller section of
input as a part of its definition. For example, in a binary search tree, an
internal node $v$ with key $k$ partitions the set of input keys into keys
greater than $k$ that are stored recursively in the right subtree and keys less
than $k$ that are stored recursively in the left subtree. Thus the children of
each internal node are themselves binary search trees defined over a subset of
the input. 

To avoid tedious details,
we make the typical assumption that $U$ is the set of integers in the range
$[N]$ and that $N$ has the form $2^{2^\ell}$ for some positive integer $\ell$.
Note that $\ell = O(\log \log N)$ is the number of levels of recursive
structures.  
Each node of the VEB tree contains the following four fields: \texttt{Min},
\texttt{Max}, \texttt{Bottom}, and \texttt{Top}. The \texttt{Min} and
\texttt{Max} fields contain the minimum and maximum elements in the node's
subtree. 
The VEB partitions $[N]$ into $\sqrt{N}$ buckets each of size $\sqrt{N}$. These
buckets are stored in \texttt{Top}, a single recursive structure. Each bucket in
\texttt{Top} points to a recursive VEB structure for a universe size of
$\sqrt{N}$ in \texttt{Bottom}. The recursive structures in \texttt{Bottom} can
be thought of as the children of the node, and the single structure in
\texttt{Top} functions as an efficient way to search these children. 
If a key $k$ is the maximum or minimum integer in a particular recursive
structure,
then it is just stored in the Max or Min fields. Otherwise, it is stored
recursively in \texttt{Bottom}$[k \mydiv \sqrt{N}]$ as $k \mod \sqrt{N}$.

Let $N,M$ be two integers that fit in a machine word, and let $C = [\log
^{\frac{1}{4} }M]$. 
The Generalized van Emde Boas (GVEB) data structure of
\cite{Giora2009} maintains a set of (key, color) pairs $(k,c)$ where $k$ is an
integer in $[N]$ and $c$ is an integer in $C$.  Let $C_q \subseteq C$ be a set
of colors. The query \texttt{find}$(k_q,C_q)$ returns the successor of $k_q$ with
color $c \in C_q$, i.e. we want to return the minimum key element $(k', c')$ 
 such that $k' \geq k_q$ and  $c' \in C_q$. 
GVEB supports find, insert and delete operations in
$O(\log\log N)$ worst case time.

The GVEB tree is similar to the VEB tree, only instead of maintaining a single
minimum and maximum integer in each recursive structure, we
maintain a min and max for each color $c \in C$. Let $G$ be a GVEB structure.
The \texttt{Min} and \texttt{Max} arrays are actually maintained via two arrays
each (4 total) of size $|C|$: \texttt{max-key, max-rank, min-key, min-rank}. For
each $c \in C$, \texttt{max-key}$[c] = \max \{k\ | (k,c) \in G \}$ 
and
\texttt{max-rank}$[c]$ is the rank of \texttt{max-key}$[c]$. The arrays
\texttt{min-key} and \texttt{min-rank} are defined symmetrically. \texttt{Max}
and \texttt{Min} also each contain a \emph{q-heap}, max-Q and min-Q, which
contain all the elements in Max and Min respectively.  Recall that the \emph{q-heap} of
\cite{Fredman1994} can store up to $\log^\frac{1}{4}M = |C|$ elements, requires
a lookup table of size $M$,  and supports successor and rank queries in constant
time.

Finally, we precompute a table of size $O(M)$ that points to the maximum
element in $G$ with a color in $C_q$ for every possible set $C_q$ and for each
possible \texttt{max-rank} array.  Together these data structures allow us to
spend only constant time at each level of the structure during a find, insert,
or delete operation, so that the time required for each operation is $O(\log
\log n)$. For full details of how these operations work, see section 5.1 of
\cite{Giora2009}. 

In addition to the insert, delete, and find operations, 
we also require a one-dimensional
colored range reporting query. Therefore we extend the GVEB $G$ to also support the
query \texttt{report}$(i,j,C_q)$, which returns the set of elements $\{(k,c) \in
G| i \leq k \leq j, c \in C_q\}$. We also need an internal helper query
\texttt{reportany}$(i,j,C_q)$, which, for each color $c \in C_q$ such that the
set $\{(k,c) | (k,c) \in G, i \leq k \leq j \}$ is non-empty, returns exactly
one element from that set. For each color $c$ such that there exists an element
$(k,c) \in G$, we maintain pointers to the successor and predecessor of $k$ with
color $c$. We can maintain these pointers in $G$ with no additional asymptotic
cost. Then to answer a \texttt{report} query, we first perform a
\texttt{reportany} query, and then follow these pointers to find the rest of the
elements in the range.  We now describe how to perform the \texttt{reportany}
query efficiently. 

We will require the data structure from Lemma 3.2 of~\cite{Mortensen2003}. 
\begin{lemma}
\label{reportLemma}
We can maintain an array $A$ indexed by $C$ where each entry in $A$ is an
integer in $[O(N)]$ such that updates to $A$ can be performed in constant time
and such that given i and j we can calculate the set $\textrm{report}(A,i,j) =
\{ c \in C | i \leq A[c] \leq j \}$ in constant time. The space usage of $A$ is
$O(|C|)$ and $A$ can be initialized in constant time. 
\end{lemma}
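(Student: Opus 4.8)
The plan is to exploit that $|C| \le \lceil \log^{1/4} M \rceil$ is minuscule compared to a machine word, so that a suitable encoding of all $|C|$ entries fits in a constant number of words and every operation reduces to $O(1)$ word operations plus lookups into precomputed tables; this is exactly the regime of the q-heap of Fredman and Willard~\cite{Fredman1994} that the paper already invokes. I would store three things: (i) the array $A$ itself as $|C|$ words, so that overwriting an entry is trivially $O(1)$ and the per-instance space is $O(|C|)$; (ii) a q-heap $Q$ holding the $|C|$ pairs $(A[c],c)$, ordered lexicographically by $(\text{value},\text{color})$ so that ties are broken and $Q$ supports insertion, deletion, and rank-of-an-arbitrary-key in $O(1)$ worst case using a shared table of size $O(M)$; and (iii) a single machine word $\Pi$ storing the permutation $\mathrm{rank}\mapsto\mathrm{color}$, i.e.\ for each $r\in\{0,\dots,|C|-1\}$ it records the color whose value currently has rank $r$ in $A$. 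The word $\Pi$ fits because it needs only $|C|\lceil\log|C|\rceil = O(\log^{1/4}M \cdot \log\log M) = o(\log M)$ bits.

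To answer $\mathrm{report}(A,i,j)$ I would first use $Q$ to get $r_1$, the number of entries of $A$ strictly below $i$, and $r_2$, the number at most $j$ (two $O(1)$ rank queries). The colors to return are exactly those occupying ranks $r_1,r_1+1,\dots,r_2-1$ in $\Pi$. Reading off those fields of $\Pi$ and OR-ing together the singleton masks $2^{\mathrm{color}}$ is a fixed function of the word $\Pi$ and the two small integers $r_1,r_2$, hence computable by a single lookup into a table indexed by $(\Pi,r_1,r_2)$ of size $M^{o(1)}=O(M)$; the answer emerges as a bitmask over $C$, which is one word since $|C|\le w$. Such tables, like the q-heap's, are built once and shared, so they are not charged against the $O(|C|)$ per-instance space — consistent with the convention used elsewhere in the paper (cf.\ the $O(M)$-table in the GVEB structure).

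For an update $A[c]\leftarrow v$ I would read the old value, query $Q$ for $c$'s current rank $p$ and for the rank $q$ that $(v,c)$ would receive, replace $(A[c]_{\text{old}},c)$ by $(v,c)$ in $Q$, write $A[c]\leftarrow v$, and finally repair $\Pi$: since the only change is that $c$ leaves rank $p$ and the block of ranks between $p$ and $q$ shifts by one, $\Pi$ is transformed by a single cyclic rotation of a contiguous field-block, again an $O(1)$ word operation (or a table lookup keyed by $(\Pi,p,q,c)$). Initialization sets $A$ to default sentinels, $\Pi$ to the identity, and $Q$ to the corresponding set; to get genuine $O(1)$ initialization despite $|C|=\omega(1)$, use the standard lazy-array trick and treat unwritten entries as holding a sentinel outside the domain of any query, handling the one boundary case explicitly. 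The shared tables and the q-heap's table are precomputed once in $O(M)$ time, independent of any instance.

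The main obstacle is the word-RAM accounting rather than any combinatorial idea: each step is obviously $O(|C|)$ or $O(\log|C|)$ in the comparison model, and the real work is checking that it compresses to $O(1)$ word operations plus lookups in tables of total size $O(M)$. The single place where the hypothesis $|C|=O(\log^{1/4}M)$ is genuinely needed — rather than routine — is in verifying that $\Pi$ and the table indices fit in $O(1)$ words and $O(M)$ space, and that the q-heap applies to the $|C|$ stored values, i.e.\ that the value range $O(N)$ fits in the q-heap's universe (which holds under the standing assumption $N\le M$). All of this is precisely Lemma~3.2 of~\cite{Mortensen2003}, which one may simply cite; the above is its construction.
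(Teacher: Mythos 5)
Your construction is correct and is essentially the paper's: both reduce the report to two $O(1)$ q-heap rank queries followed by a single lookup keyed by a one-word permutation encoding and the two small rank indices, and both defer the low-level details to Mortensen's Lemma~3.2. The only cosmetic difference is that the paper packs the map $B$ from color to rank into the word, while you pack its inverse $\Pi$ from rank to color; the two are interchangeable here since the report lookup is parameterized by the whole word and the pair $(r_1,r_2)$ either way.
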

\begin{proof}
The structure uses a q-heap and a global lookup table of size $O(N)$ to
maintain in a single word a table $B$ indexed by elements in $C$ such that
$B[C]$ is the rank of $A[c]$ among the elements of $A$.
See~\cite{Mortensen2003} for comparable details.
This structure is
similar to the \texttt{max-rank} data structure of the GVEB structure above, but
instead of returning the rank of a single element, it returns the colors of the
elements between two ranks as a single word. 
\end{proof}

Let $min'$ and $max'$ each be data structures of the type in Lemma
\ref{reportLemma}. We add these structures to each recursive GVEB structure
$G$ to answer the query reportany$(G,i,j,C_q)$. The following algorithm 
follows how Mortensen answered the \texttt{reportany} query in his $S_n$ data
structure~\cite{Mortensen2003}, but differs in some important
details. 
If $G = \emptyset$ or $i >j$, then
there are no elements to report. Otherwise, we find elements to report from the
\texttt{min-key} array as follows. Let $C_r = \texttt{report}(i,j, min')$, and
let $C_{min} = C_q \cap C_r$.   We repeat this process to find elements in the
\texttt{max-key} array. Let $C_q = C_q \backslash C_r$, the subset of query
colors that we have not already reported, and let 
$C_r = \texttt{report}(i,j, max')$, and 
$C_{max} = C_q \cap C_r$. We report the elements min-key$[c]$ for
each $c \in C_{min}$ and max-key$[c]$ for each $c \in C_{max}$. Thus we have
found all colors for which the minimum or maximum element of that color was in
the range $[i,j]$, but there could be other colors for which there exists an
element in the range $[i,j]$ but the minimum and maximum elements of that color
were both outside the range. Therefore, we recursively search for elements with
colors in $C_q = C_q \backslash (C_{min} \cup C_{max})$ via the following
procedure.  If $i = 0$ or $j = N-1$, we return because there can be no
additional elements to report--the max and min elements could not have both been
outside the range. Otherwise, if $i \mydiv \sqrt{N} = j \mydiv \sqrt{N}$, then
the range falls entirely inside a single bucket, and we can find the remaining
elements with a single recursive call: \texttt{reportany}$(G.\texttt{Bottom}[i
\mydiv \sqrt{N}], i \mod \sqrt{N}, j \mod \sqrt{N},C_q)$. Otherwise, the range
spans multiple buckets, and we require 3 recursive calls: 
\begin{enumerate}
\item
\texttt{reportany}$(G.\texttt{Bottom}[i \mydiv \sqrt{N}], i \mod \sqrt{N},
\sqrt{N}-1, C_q)$, 
\item
\texttt{reportany}$(G.\texttt{Top}, i \mydiv \sqrt{N} +
1, j \mydiv \sqrt{N} - 1, C_q)$,
\item
\texttt{reportany}$(G.\texttt{Bottom}[j
\mydiv \sqrt{N}], 0, j \mod \sqrt{N}, C_q)$, passing the $C_q$ parameter by
reference. 
\end{enumerate}
The first recursive call (1) covers the last part of the first
bucket, and the third recursive call (3) covers the first part of the last
bucket. In both (1) and (3), we encounter the stopping condition: $i = 0$ or $j
= N - 1$, and so we only need to examine min-key, max-key, $min'$ and $max'$. It
is only the second recursive call (2) that may require additional recursion.
Note that some of the elements returned by (2) not be actual elements, but
rather pointers to recursive structures. So, as a final step we iterate over all
the elements returned, and while there exists an element of color $c$ that is
actually a pointer to some recursive structure $T$, we replace it with
$T.\texttt{min-key}[c]$. 

We now analyze the additional time and space cost of supporting the
\texttt{report} query. First we note that the data structure of Lemma
\ref{reportLemma} requires no more space than the \texttt{Min} and \texttt{Max}
data structures. Therefore, the asymptotic space usage of our augmented GVEB is
the same as the original GVEB, which was shown by~\cite{Giora2009} to be
$O(N\cdot|C|)$. Second, observe that the data structure of Lemma \ref{reportLemma}
can be updated in constant time, and only needs to be updated when we are
already making changes on a given level of recursion. Therefore updates take
$O(\log \log N)$, no more time than in the original GVEB.  Finally, we consider
the time required to perform a report query. 

The first observation is that whenever we perform more than one recursive call
in the \texttt{reportany} query, only one call requires non-constant work. It's
possible that in the replacement step, $T.\texttt{min-key}[c]$ may point to
another substructure $T'$, and we may need to follow $O(\log \log N)$ pointers
to finally reach a true element. However, the second observation is that for
each substructure $T$ that we examine, there are at least two elements that
fall in the range and which we will ultimately report. Therefore we can answer
the report query in time $O(\log \log n + k)$.

\clearpage
\section{Generalized Union-Split-Find (GUSF)}
\label{app:gusf}
The generalized Union-Split-Find (GUSF) structure is an extension of the
dynamic Union-Split-Find (USF) data structure~\cite{Dietz1991,Raman1993}.
The USF structure maintains a list $L$ of elements from an ordered set, some of
which are marked. It supports the following operations: 
\begin{itemize}
\item 
\texttt{FindNext}$(x)$ returns the smallest element $x' \geq x$ such that
$x'$ is marked. 

\item
\texttt{Add}$(x,y)$ inserts a new unmarked element $x$ just before the element
$y$.

\item
\texttt{Remove}$(x)$ removes an unmarked element $x$ from the list.

\item
\texttt{Unmark}$(x)$ (Union) changes $x$ from a marked element to an unmarked
element.

\item
\texttt{Mark}$(x)$ (Split) changes $x$ from an unmarked element to a marked
element.
\end{itemize}
The USF structure can be considered monochrome. Each element is either unmarked
(white) or marked with a single color (black). The GUSF extends the USF to
support a whole set of colors $C$ so that each element can either be unmarked,
or marked with some subset of colors $C' \subseteq C$. The query and update
operations are extended accordingly. The GUSF also supports additional queries: 
a \texttt{FindPrev} query, which is symmetric to \texttt{FindNext} and a  
\texttt{Report}$(y_1, y_2, C_q)$ query, which 
reports each element $y$ in the list such that $y_1 \leq y \leq y_2$ once for
each color in $C(y) \cap C_q$,
where $C_q$ is a set of query colors and 
$C(y)$ is the set of colors with which element $y$ is marked. 

\begin{appthm}
A GUSF data structure with parameter $N$ uses $O(N)$ space
and supports 
updates in $O(\log \log N)$ amortized time, find queries in  
$O(\log \log N)$ worst case time and report queries in
$O(\log \log N + k)$ worst case time.
The find queries include \texttt{FindNext}$(x, C_q)$ and  
\texttt{FindPrev}$(x,C_q)$ query, which respectively find the successor and
predecessor of $x$ with color $c \in C_q$.
The query \texttt{Report}$(y_1, y_2, C_q)$ reports each element $y$ such that $y_1 \leq y \leq y_2$ once for each color in $C(y) \cap C_q$. 
Supported updates include \texttt{Mark}$(x,c)$, \texttt{Unmark}$(x,c)$,
\texttt{Add}$(x)$ and \texttt{Remove}$(x)$.
\end{appthm}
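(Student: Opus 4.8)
The plan is to generalize the classical two-level Union-Split-Find construction~\cite{Dietz1991,Raman1993}: keep its outer level but replace the van Emde Boas tree there with the Generalized van Emde Boas (GVEB) structure of Appendix~\ref{app:gveb}, and upgrade its inner (``block'') level from monochrome bit tricks to colored ones. Maintain the dynamic list $L$ as a sequence of $\Theta(N/\log N)$ \emph{blocks}, each holding $\Theta(\log N)$ consecutive elements of $L$. For a block $B$ set $C(B)=\bigcup_{y\in B}C(y)$, and register in an outer GVEB the index of $B$ paired with every color in $C(B)$; since this GVEB is built over $O(N/\log N)$ keys and $|C|=\Theta(\log^{1/4}M)$ colors, it uses $O(N)$ space and answers colored find and report in $O(\log\log N)$ and $O(\log\log N+k)$ time, respectively (Appendix~\ref{app:gveb}). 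Inside each block, store the element--color incidences in $O(|C|)$ packed words, augmented with $q$-heaps~\cite{Fredman1994} and an instance of the structure of Lemma~\ref{reportLemma} --- precisely the toolkit the GVEB already uses internally --- so that intra-block colored successor/predecessor, intra-block colored range reporting, and intra-block \texttt{Mark}/\texttt{Unmark}/\texttt{Add}/\texttt{Remove} all run in $O(1)$ time, a reporting call costing $O(1+k')$ when it outputs $k'$ pairs.

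With this in place every query reduces to $O(1)$ intra-block calls plus one GVEB call. For \texttt{FindNext}$(x,C_q)$: first search $x$'s own block for the least element $\ge x$ carrying a color in $C_q$; failing that, ask the GVEB for the least block index beyond $x$'s whose color set meets $C_q$, and take the least qualifying element at the front of that block --- total $O(\log\log N)$, and \texttt{FindPrev} is symmetric. For \texttt{Report}$(y_1,y_2,C_q)$: use the intra-block colored-range routine to output the qualifying pairs in the (at most two) blocks holding $y_1$ or $y_2$, then run the GVEB \texttt{report} query on the block indices strictly between them; for each pair $(B,c)$ it returns, enumerate via the block structure every element of $B$ marked with color $c$, emitting $(y,c)$ for each. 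This emits every output pair $(y,c)$ exactly once; the GVEB \texttt{report} runs in time $O(\log\log N+(\#\text{pairs it returns}))$ with that count at most $k$, and the total intra-block work is $O(k)$, so \texttt{Report} runs in $O(\log\log N+k)$.

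The updates also cost $O(\log\log N)$ amortized. \texttt{Mark}$(x,c)$ and \texttt{Unmark}$(x,c)$ edit $x$'s block in $O(1)$; the only non-local work occurs when $c$ becomes newly present in, or newly absent from, $C(B)$ for the host block $B$, in which case we perform one GVEB insert or delete of $(B,c)$. \texttt{Add}$(x)$ and \texttt{Remove}$(x)$ edit the host block in $O(1)$, and when a block leaves the size window $[\tfrac12,2]\cdot\Theta(\log N)$ we split or merge it, re-registering the $O(1)$ affected blocks in the GVEB at a cost of $O(|C|\log\log N)$; since this happens only once per $\Theta(\log N)$ list updates and $|C|=O(\log^{1/4}N)=o(\log N/\log\log N)$, the amortized charge is $O(\log\log N)$, with the amortization made rigorous exactly as in~\cite{Giora2009}. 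Each block occupies $O(|C|)$ words, so the blocks use $O(N)$ space in total, matching the outer GVEB.

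The main obstacle --- and essentially the only new ingredient --- is the inner colored layer: making each intra-block colored primitive run in constant time despite a block spanning $O(|C|)$ machine words, and in particular handling a colored successor or colored range report that must combine \emph{all} of the requested colors at once rather than one color at a time. The approach is to reuse, at block granularity, exactly the packed-rank, $q$-heap, and lookup-table machinery already developed for the GVEB in Appendix~\ref{app:gveb} together with the constant-time colored range extraction of Lemma~\ref{reportLemma}; this is why the block size is chosen to keep a block's incidence data within the word/lookup-table budget of that machinery, and once the inner layer is in hand the outer GVEB and the list-block bookkeeping follow the monochrome template of~\cite{Dietz1991,Raman1993,Giora2009} essentially unchanged.
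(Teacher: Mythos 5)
Your high-level architecture—blocks of consecutive elements registered per color in an outer GVEB, with colored find and report answered first inside the endpoint blocks and then across whole blocks via the GVEB—matches the paper's. Two specific steps, however, do not hold up as stated. The claimed $O(1)$-time intra-block colored primitives do not follow from ``precisely the toolkit the GVEB already uses internally.'' That toolkit ($q$-heaps and the packed-rank table of Lemma~\ref{reportLemma}) yields constant-time operations on sets of size $|C| = O(\log^{1/4} M)$—small enough to fit in a word and to tabulate with an $O(M)$-size table. A block of $\Theta(\log N)$ elements annotated with colors from a palette of $|C|$ colors admits $|C|^{\Theta(\log N)}$ configurations, which no polynomial-size table can enumerate, and no single-word $q$-heap holds $\Theta(\log N)$ keys; even a straightforward packed-bitvector encoding of the color incidence occupies $\Theta(\log^{1/4} N)$ words, so a colored successor over an arbitrary $C_q$ costs $\Theta(\log^{1/4} N)$, not $O(1)$. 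The paper instead builds a small binary search tree over each block, stores packed color arrays and leaf pointers at internal nodes, and settles for $O(\log\log n)$ per intra-block operation, which is all the query bound actually needs since only $O(1)$ blocks plus one GVEB call are visited. Your $O(1)$ claim would require a new idea that the appendices do not supply, although for the query bounds this particular over-claim is harmless.

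The genuine gap is in the update analysis. You never explain how block indices are assigned and kept order-consistent as blocks split and merge. The GVEB needs integer keys from a fixed universe whose numeric order matches the order of the blocks in $L$; when a split creates a new block, you must assign it a label between its neighbors. The paper uses Willard's order-maintenance structure, where one insertion can relabel $O(\log^2 |B|)$ blocks, and each relabeled block forces a delete-and-reinsert in the GVEB for every color it carries, at $\Theta(|C| \log\log N)$ per block. Your ``re-registering the $O(1)$ affected blocks'' ignores this cascade, and your block size $\Theta(\log N)$ cannot absorb it: a split costs $\Theta(\log^2 N \cdot |C| \cdot \log\log N)$ and is charged to only $\Theta(\log N)$ intervening list updates, giving an amortized cost of $\omega(\log N)$ per update rather than $O(\log\log N)$. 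This is exactly why the paper uses blocks of size $\Theta(\log^\beta n)$ with $\beta > 3$ and explicitly invokes the order-maintenance machinery (together with periodic rebuilding of the GVEB to keep its universe parameter at $\Theta(n/\log^\beta n)$). To repair your proof you would have to adopt the larger blocks and the order-maintenance relabeling, and then redo the amortized accounting for \texttt{Add}/\texttt{Remove}; as written, the update bound is not established.
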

\begin{proof}

We employ a technique utilized in the Union-Split-Find data structure ~\cite{Dietz1991,Raman1993}
and later adopted by~\cite{Mortensen2003} in his data structure for a colored linked list and~\cite{Giora2009} in their Generalized Union-Split-Find data structure. This technique simultaneously removes the requirement that the elements have integer keys and reduces the space required by our data structure. 
Let $L$ be our list and let $n = |L| \leq N$ be the number of elements in the list.  The key idea is to group consecutive elements of the linked list into blocks of size $\Theta(\log^\beta n)$ for an appropriate small constant $\beta > 3$. We label the blocks with integer labels such that the order of the labels indicates the order of the blocks in $L$. We maintain these labels using the order maintenance structure of 
Willard~\cite{Willard1992}.

In each block $b$ we build a binary search tree $T_b$ and store the elements of $b$ in the leaves of $T_b$, ordered by their order in $L$. We augment the nodes of $T_b$ according to the techniques of~\cite{Mortensen2003} and~\cite{Demaine2007}. Each element $y \in L$ is associated with a set of colors $C(y)$, where $|C(y)| = O(1)$. 
Thus, for each leaf $v_y$ in $T_b$ representing an element $y \in L$, we let $C(v_y) = C(y)$.  
For each internal node $v$ in $T_b$ with children $u$ and $w$, we let $C(v) = C(u) \cup C(w)$. Hence we color each internal node with the union of the colors of its descendant leaves, and we color the root of $T_b$ with the union of the colors of all the elements in block $b$. We say a block has color $c$ if it contains a leaf with color $c$. Thus the colors of the block $b$ equal the colors of the root $r_b$. For each block we maintain a single array $b.\texttt{allleafs}$ of size $|b| = O(\log^\beta n)$ such that for each leaf $v_y \in b$, there is a pointer in $b.\texttt{allleafs}$ to $v_y$. We do not necessarily maintain the order of $b.\texttt{allleafs}$. to match the order of leaves in $b$, but we do maintain a list $b.\texttt{freelist}$ of unused indices in $b.\texttt{allleafs}$. We also keep an array $v.\texttt{leafs}$ indexed by $C$ in each internal node $v$ such that $v.\texttt{leafs}[c] \neq 0$ iff $v$ has a descendant leaf with color $c$, and $b.\texttt{allleafs}[v.\texttt{leafs}[c]]$ is a pointer to one such leaf. 
Each of the indices in $v.\texttt{leafs}$ is $O(\log \log n)$ bits and therefore the entire array $v.\texttt{leafs}$ can be stored in a single word. Furthermore, $C(v)$ can be determined in constant time by examining $v.\texttt{leafs}$. Lastly, for each leaf $v_y$, and color $c \in C(v_y)$, we keep a pointer to predecessor and successor of $v_y$ with color $c$, thus creating a linked list over the leaves for each color $c \in C(r_b)$.

We represent each block $b$ by the root $r_b$ of the tree $T_b$. We keep the roots in a linked list $B$. Since each block has size $\Theta(\log^\beta n)$, $|B| = O( n / \log^\beta n)$. We assign the roots integer labels according to their order using the order maintenance structure of Willard~\cite{Willard1992},
denoted OM. We store the roots of the blocks in a GVEB structure (see Appendix \ref{app:gveb}) $G$ with parameters $(N,M)$  according to their integer labels, where $N = \Theta(|B|)$.  Given a root with integer label $k$ and colors $C(r_b)$, we store one element $(k,c)$ in $G$ for each color $c \in C(r_b)$. OM requires linear space. Insertions and deletions to OM can be performed in $O(\log^2 |B|)$ worst case time, and may cause $O(\log ^2 |B|)$ integer labels to be updated.

\begin{itemize}
\item 
\texttt{FindNext}$(x, C_q)$. This algorithm resembles that of~\cite{Giora2009}. However,
instead of having a bit-vector $C(v)$ for each internal node $v$, $C(v)$ is
represented implicitly by and calculated in constant time from the array
$v.\texttt{leafs}$. We describe the algorithm here for completeness. If $C(x)
\cap C_q \neq \emptyset$, then return $x$. Otherwise $x$ is contained in some
block $b$, and therefore in a tree $T_b$. We traverse the path from $x$ to
$r_b$, and look for the first node $y$ hanging to the right of this path such
that $C_q \cap C(y) \neq \emptyset$. If such a node $y$ exists, then we find the
leftmost leaf in $y$'s subtree with color $c \in C_q$ as follows. Let $v$ be a
the current node we're examining. If $v$ is a leaf, then return it. We repeat
the following until we reach a leaf. Let $v_\ell$ and  $v_r$ be the left and
right children of $v$ respectively. If $C_q \cap C(v_\ell) \neq \emptyset$, then
set $v = v_\ell$. Otherwise set $v = v_r$. Clearly if a there exists a successor
of $x$ in $T_b$ with color $c \in C_q$ we will find it. If not, then we find the
block $b$' with a color in $C_q$ with a query find query on $G$. If $b$' exists
then we repeat the above process to find the leftmost leaf with a color in $C_q$
in $T_b'$. If not, then there is no successor of $x$ with color $c \in C_q$.  It
takes $O(\log \log n)$ time to perform the query in each block, and $O(\log \log
N) = O(\log\log n)$ time to perform the query in $G$. Therefore the total time
required is $O(\log \log n)$.  

\item \texttt{FindPrev}$(x,C_q)$ finds the predecessor of $x$ with color $c \in C_q$ instead
of the successor. It is supported symmetrically.

\item 
\texttt{Report}$(y_1, y_2, C_q)$.  We handle the report query as described by
\cite{Mortensen2003}. This algorithm reports each element $y \in L$ such that
$y_1 \leq y \leq y_2$ once for each color in $C(y) \cap C_q$. Since the number
of colors in $C(y)$ is constant, we can easily modify this procedure to only
report each element once.  Let $T_1$ be the tree for the block containing $y_1$
and let $T_2$ be the tree for the block containing $y_2$. We begin by searching
bottom up in $T_1$ and $T_2$ to report elements to the right of $y_1$ in $T_1$
and to the left of $y_2$ in $T_2$ with colors in $C_q$. We can report these
elements in $O(\log \log n + k)$ time using the \texttt{allleafs} array,
\texttt{leafs} arrays, and the leaf lists. If $T_1 \neq T_2$, then we also
perform a report query in $G$ to find all the blocks between $T_1$ and $T_2$
with colors in $c \in C_q$. For each such block $b$ and color $c$ with root
$r_b$, we can use \texttt{leafs$[c]$} and the leaf list for $c$ to report all
such leaves in constant time per element. 

\item  
\texttt{Mark}$(x,c)$, \texttt{Unmark}$(x,c)$. This is analogous changing the
color of an element $x$ in~\cite{Mortensen2003}. $x$ is stored as a leaf in some
tree $T_b$. We spend $O(\log \log n)$ time updating the nodes on the path from
$x$ to the root $r_b$.  If this adds or removes a color from $C(r_b)$, we make
the corresponding insertion or deletion from $G$ in $O(\log \log n)$ time. We
also need to update the leaf list for color $c$ in $T_b$. Since we can find the
successor and predecessor of $x$ with color $c$ in $O(\log \log n)$ time, we can
make this update in $O(\log \log n)$ time.

\item 
\texttt{Add}$(x)$.
Find the block $b$ where $x$ belongs in $O(\log \log n)$ time, and insert it into $T_b$ in $O(\log\log n)$ time.
Remove an index $i$ from \texttt{b.freelist} and point \texttt{b.allleafs[i]} to $x$ in constant time. Note that insertion may require a block to become too large. However, we can split $b$ into two blocks and amortize the cost of splitting as described in~\cite{Giora2009} or~\cite{Mortensen2003}, so that the amortized cost of an insertion is still $O(\log \log n)$.  

\item 
\texttt{Remove}$(x)$. We do lazy deletion. Instead of removing an element from $T_b$ we mark it as deleted. This requires periodic rebuilding of the entire structure, but does not increase the amortized asymptotic time of the operations.

\end{itemize}
Now we consider the space usage of our structure. Recall that the space required
by a GVEB structure $G$ with parameters $N,M$ is $O(N|C|)$ where $C =
[\log^\frac{1}{4}D]$. We can periodically rebuild $G$ so that $N = \theta(n /
\log^\beta n)$ and $C = O(\log^\frac{1}{4} n)$. Since we set $\beta > 3$, the
total space usage is $O(n)$. The data structure also requires a single lookup
table of size $O(M)$. In our application, $M = O(n)$ as well. 
\end{proof}

\clearpage
\section{One-Dimensional Retroactive Queries}
\label{app:1dRetro}

In this section we build on the GUSF data structure of 
Section~\ref{app:gusf}, and show that we can support one-dimensional fully-retroactive range
reporting and nearest neighbor queries in logarithmic time. 
Note that in one dimension a nearest
neighbor query reduces to a successor and predecessor query.

Let $R \subset \mathbb{R}$ be a set of points we maintain over time. Each point
with value $y$, insertion time $a$ and deletion time $b$ is represented by a
horizontal line segment from $(a,y)$ to $(b,y)$, which we store as the triple
$(a,b,y)$. We construct a modified segment tree as a weight balanced B-tree $T$
with branching parameter $\log^\delta n$ for a small constant $\delta$. The
leaves of the tree represent the endpoints of a segment. A segment $s = (a,b,y)$
is stored at the leaves for its endpoints $a$ and $b$. In addition, we associate
$s$ with all the nodes on the paths from $a$ and $b$ to the root of $T$.  

For each internal node $v$, we maintain a set of segments of associated with $v$
in a GUSF with parameter $N = O(n)$  denoted by $M(v)$.  Each segment $s =
(a,b,y)$ is associated with a color $\overline{c}(s,v) \in C$ that represents
the children of $v$ that are ancestors of $a$ or $b$. We observe that if $v$ has
$\lambda$ children, then we require $O(\lambda^2)$ colors. Therefore we set
$\delta = (1/4) \log ^\frac{1}{8} n $, which limits the number of children so
that the number of colors required is at most $|C| = \log^\frac{1}{4} n$. We
will use the color of segments to guide both queries and updates. We have the
following possible colors. For each distinct pair $u,w$ of children of $v$, we
have a color $c(u,w)$. For each child $u$ of $v$ we also have colors $c(u,u)$
$c_\ell(u)$ and $c_r(u)$. We assign the color $\overline{c}(s,v)$ of segment $s
= (a,b,y)$  as follows. If either $a$ or $b$ are descendants of $v$, then there
must be a child of $v$, $v_a$ or $v_b$ respectively which contain these
endpoints. If both $a$ and $b$ are descendants of $v$,  we let
$\overline{c}(s,v) = c(v_a, v_b)$. Otherwise if only the left endpoint $a$ is a
descendant of $v$, we let $\overline{c}(s,v) = c_\ell(v_a)$. Similarly if only
right endpoint $b$ is a descendant of $v$ we let $\overline{c}(s,v) = c_r(v_b)$.
Thus, if we consider each internal node to represent the range of values of its
leaves, these colors indicate whether $s$ is contained in the range of $v$ or
whether the segment is cut on the left or right by the range. If $s$ either does not
intersect the range of $v$ or spans the range of $v$ 
then neither of its endpoints is a descendant of $v$
and so it is not stored in $M(v)$. 

We maintain the following three tables of~\cite{Giora2009} to facilitate efficient updates and queries on $M(v)$. 
\begin{itemize}
\item 
$U(v)$ is used to update $M(v)$. It maps each pair of children $u,w$ to a color $c(u,w)$. It also maps each child $u$ to the colors $c(u,u)$, $c_\ell(u)$ and $c_r(u)$. 
\item
$Q(v)$ is used to query $M(v)$. It maps each child $u$ to a set of colors
$Q(v)[u]$, all the colors that correspond to segments that span the range of $u$. Thus for each sibling $r$ to the right of $u$ and each sibling $\ell$ to the left of $u$, $Q(v)[u]$ contains the colors $c(\ell, r)$, $c_\ell(\ell)$, and $c_r(r)$. 
\item
$F(v)$ replaces the fractional cascading structure used by~\cite{Giora2009} in their previous implementations. 
It maps each child $u$ of $v$ to the set of colors $F(v)[u] = \{c_\ell(u)\}
\cup \{c_r(u)\} \cup \{c(u,u)\} \cup \{c(u,w) | u \neq w  \textrm{ is a child
of v}\}$, which corresponds to the set of segments for which at least one endpoint is a descendant of $v$. 
\end{itemize}
Let $r$ be the root of our segment tree $T$. Note that $M(r)$ contains all
segments in $T$. We augment $r$ with a balanced search tree $T_r$. We
augment all internal nodes with the tables described above. Given $T$, with the
above augmentations, 
Giora and Kaplan~\cite{Giora2009} proved the space required is only $O(n
\log n / \log \log n)$, and described how to perform insertions, deletions, in
$O(\log n)$ amortized time and fully-retroactive successor queries in
$O(\log n)$ time. We now show how to perform fully-retroactive (one dimensional) range reporting
queries in $O(\log n + k)$ time. 

Given a query report$(t, y, z)$ we can report all the points present at time $t$
with values between $y$ and $z$ as follows. Let $r=v_0, v_1, v_2, \ldots v_t$ be
the search path for $t$ in $T$. First we search for the successor of $y$, $y_0$
and the predecessor of $z$, $z_0$ in $T_r$. 
This gives us pointers to the
smallest and largest elements inside the range, 
(see Figure \ref{fig:AugSegTree}). 
Then for each $i$ in $0 \leq i < k$ we report all the
segments in $M(v_i) \backslash M(v_{i+1})$ by calling 
$M(v_i)$.\texttt{report}$(y, z, Q(v_i)[v_{i+1}])$, 
and we find $y_{i+1}$ and $z_{i+1}$ with two
queries: \texttt{FindNext}$(y_i, F(v_i)[v_{i+1}])$ and  
\texttt{FindPrev}$(z_i, F(v_i)[v_{i+1}])$.
Note that all segments of $M(v_{i+1})$ are contained in $M(v_i)$ and their
colors correspond exactly to the set $F(v_i)[v_{i+1}]$. Hence $y_{i+1}$ and
$z_{i+1}$ are the smallest and largest values respectively of any of the
remaining unreported segments that could possibly intersect the range. Therefore
all segments within the range $[y,z]$ in $M(v_i) \backslash M(v_{i+1})$ will be
reported for each $i$, and correctness follows. To bound the time required, we
observe that we will search twice in $T_r$ and perform one report and two find
queries for each node on the search path to $t$. Therefore we spend $O(\log n)$
time in the root, and $O(\log \log n + k_i)$ time for each of the nodes visited
in $T$. Since the number of nodes on the search path is limited by the height of
$T$ to $O(\log n / \log \log n)$, the total time is $O(\log n + k)$.

\clearpage
\section{Space Filling Curves}
\label{app:space}

Space Filling Curves (SFCs) 
were introduced by Peano~\cite{Peano1890} as a way to
map the unit interval onto the unit square. As a result, two-dimensional SFCs
are
sometimes referred to as Peano curves. SFCs 
are commonly used in computer science
to map multidimensional points down to one dimension.  They have the desirable
property that points that are close to each other along the curve are often also
close to each other in the original multidimensional space.  By now space
filling curves are well-studied and multiple authors have applied them
specifically to quadtrees (e.g., 
\ifFull
see~\cite{Samet1984,Bern1993}) 
\else
see~\cite{Bern1993})
\fi{}
 and approximate
nearest neighbor queries~\cite{Derryberry2008,Chan2006,Chan1998,Liao2001}.
The two most well known SFCs
are the Hilbert Curve introduced by Hilbert~\cite{Hilbert1891}, and the z-curve,  proposed by Morton~\cite{Morton1966}. We call the sorted order of points with respect to their position along the Hilbert or z-curve the Hilbert order or z-order respectively. Some authors (e.g., see~\cite{Faloutsos1989}) have suggested that the Hilbert order achieves better clustering in practice. However, it is more complex to compare the relative positions of two points in the Hilbert order. The z-order has the advantage that comparison of two points only requires bitwise exclusive-or. Thus in this paper we consider the z-order.  

The z-order is sometimes called the shuffle order, because a point's position along the curve can be found by interleaving the bits of its coordinates. The metaphor is that in two dimensions we shuffle together the bits of the $x$ and $y$ coordinates as if they were two halves of a deck of cards.  
Let $P$ be the set of $d$-dimensional input points. 
Assume that each coordinate of each input point $p \in P$ can be represented in binary by a $w$-bit word. 
Let $p_i$ denote the $i$-th coordinate of $p$ and let  $p_{i,j}$ denote the $j$-th bit of $p_i$. Thus $p_i$ is represented as $p_{i,w} \cdots p_{i,1}$ in binary.
Define the \emph{shuffle} $\sigma(p)$ to be the number 
$p_{1,w} \cdots p_{d,w} \cdots p_{1,1} \cdots p_{d,1}$ in binary
Let $P_z = \{\sigma(p) | p \in P\}$, the z-order of $P$. For any two points $p, q \in P$, Chan~\cite{Chan2002} has shown that we can determine their order in $P_z$ in $O(d)$ time as follows:
\begin{quotation} \label{SFC}
\noindent$i$ = 1 \\
for $j$ = 2 to $d$ \\
 \indent   if $|p_i \oplus q_i| < |p_j \oplus q_j|$ \\
\indent\indent i = j\\
return $p_i < q_i$
\end{quotation}
where $\oplus$ denotes bitwise exclusive-or and $|\cdot|$ denotes $\lfloor \log_2 x\rfloor$. However, we can compute 
$|x| < |y|$ without using the $|\cdot|$ operator, by replacing $|x| < |y|$ with this line of code.
\begin{quotation}
\noindent if $x>y$ return false else return $x <  x \oplus y$
\end{quotation}
Thus we can compare $p$ and $q$ using only the $\oplus$ operation.

Although points that are close to each in along an
SFC are close to each other in the original multidimensional space, the
converse it not always true. Points can be close to each other in $\R^d$ but
not close to each other along the curve. Thus in general, points that are
close to each other on the curve may not be nearest neighbors in $\R^d$. In
the rest of this section, we will show that by keeping $d+1$ shifted versions
of the z-order, we can guarantee that for any point $q$, a $c$-approximate
nearest neighbor can be found among the successors and predecessors of $q$ in
one of these z-orders. Here $c$ is a constant that depends on $d$. 

Recall a lemma of Bern {\it et al.}~\cite{Bern1993}, which shows an important relationship between quadtree cells and the z-order. 
\begin{lemma}
\label{shuffleOrderAppendix}
The set of points in any quadtree cell rooted at $[0, 1)^d$ form
a contiguous interval in the z-order.
\end{lemma}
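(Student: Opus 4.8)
The plan is to prove this by induction on the depth of the quadtree cell, using the bit-interleaving structure of the shuffle operation $\sigma$ directly. First I would observe that a quadtree cell rooted at $[0,1)^d$ at depth $\ell$ is determined by fixing the top $\ell$ bits of each of the $d$ coordinates: that is, a cell corresponds to a choice of a prefix $p_{1,w}\cdots p_{d,w}\, p_{1,w-1}\cdots p_{d,w-1}\,\cdots$ of length $\ell d$ in the interleaved word. The key combinatorial fact is that the shuffle $\sigma$ places exactly these $\ell d$ bits as the \emph{most significant} bits of $\sigma(p)$, in the order ``all depth-$1$ bits, then all depth-$2$ bits, \ldots.'' Hence a point $p$ lies in a given depth-$\ell$ cell if and only if $\sigma(p)$ agrees with a fixed string on its top $\ell d$ bits, i.e. if and only if $\sigma(p)$ lies in the integer interval $[\,a\cdot 2^{(w-\ell)d},\ (a+1)\cdot 2^{(w-\ell)d}\,)$ where $a$ is the integer encoded by that length-$\ell d$ prefix.

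From this the lemma is essentially immediate: the set of $\sigma$-values falling in any cell is the intersection of $\{\sigma(p): p\in P\}$ with an interval of integers, which is by definition a contiguous interval in the z-order (the sorted order of the $\sigma$-values). I would phrase the induction step explicitly to make the prefix claim airtight: the root cell $[0,1)^d$ has empty prefix and corresponds to the full range; subdividing a depth-$\ell$ cell into its $2^d$ children appends one more bit of each coordinate, and because $\sigma$ interleaves bit-plane by bit-plane (most significant plane first), these $d$ new bits occupy the next $d$ most significant positions after the already-fixed $\ell d$ bits, so each child again corresponds to an interval, and these $2^d$ intervals partition the parent's interval. This is exactly the recursive structure of a quadtree, so the correspondence ``cell $\leftrightarrow$ aligned dyadic interval of $\sigma$-values'' is maintained at every level.

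The one place that needs care — and what I would flag as the main obstacle — is matching the paper's convention for which coordinate bits are ``high'' versus which quadtree subdivisions are ``coarse,'' and confirming that the interleaving order in the definition of $\sigma$ (namely $p_{1,w}\cdots p_{d,w}\cdots p_{1,1}\cdots p_{d,1}$) really does group the coarsest subdivision bits together at the top. If instead one interleaved coordinate-by-coordinate (all of $p_1$, then all of $p_2$, \ldots) the statement would be false, so the proof genuinely relies on the bit-plane-major ordering; I would make that dependence explicit. A secondary, purely bookkeeping point is handling cells of the quadtree on point sets $P$ whose coordinates are $w$-bit integers (equivalently dyadic rationals in $[0,1)$), so that ``quadtree cell'' means a cell at depth at most $w$; since the appendix already sets up $P$ this way, I would simply reuse that setup and cite Bern {\it et al.}~\cite{Bern1993} for the original argument, presenting the above as the self-contained version.
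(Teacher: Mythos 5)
Your proof is correct and takes essentially the same approach as the paper: both argue by induction on depth that a depth-$\ell$ cell corresponds to fixing the top $\ell d$ bits of $\sigma(p)$, hence to an aligned dyadic interval of shuffle-values, from which contiguity in the z-order is immediate. Your explicit flag that the bit-plane-major interleaving is load-bearing (coordinate-major would break the lemma) is a nice clarification but not a departure from the paper's argument.
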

\begin{proof}
For ease of exposition, we give the proof for $d=2$,
but it extends naturally to higher dimensions. 
Let $p \in [0, 1)^2$. We observe that the first two bits of $\sigma(p)$ determine which quadrant of the quadtree contains $p$, as shown in Figure \ref{figQuadrants}. Call this quadrant $Q_1$.  Similarly, the next two bits of $p$ determine which quadrant of $Q_1$ contains $p$. Inductively, if $p$ is in a quadtree cell $Q_{i}$ determined by the first $2i$ bits of $p$, then the next two bits of $p$ determine which quadrant of $Q_{i}$ contains $p$.  Thus, there is a bijection between a path in the quadtree from the root to a cell on level $i$ and the first $2i$ bits of $p$. All points within the same quadtree cell on level $i$ must have the first $2i$ bits in common. 
Suppose $Q_i$ is a quadtree cell, and that $I$ is the set of points in the
z-order contained in $Q_i$. As we have shown, these points must share the
first $2i$ bits. Now suppose we have a point $q \notin Q_i$. Then one of the
first $2i$ bits of $\sigma(q)$ must differ from the first $2i$ bits of $I$.
Therefore, $\sigma(q)$ must either come before $I$ or after $I$ in the
z-order. It cannot occur between two elements of I. Otherwise $\sigma(q)$
would not have had a bit that differs from the first $2i$ bits in $I$, which would imply that $q \in Q_i$. 
\end{proof}

We say that two points $p,q$ belong to the same $r$-grid cell iff $p \mydiv r = q \mydiv r$.
 We say that a point $p$ is $\alpha$-central in its $r$-grid cell iff for each integer $i \in [1,d]$, we have $\alpha r \leq p_i \mod r < (1-\alpha)r$, or equivalently $(p_i + \alpha r) \mod r \geq 2\alpha r$. 
The following lemma is due to Chan~\cite{Chan1998}.
\begin{lemma}
\label{rCentral}
Suppose $d$ is even. Let $v^{(j)} = (j/(d+1), \ldots, j/(d+1)) \in \R^d.$ 
For any point $p \in \R^d$ and $r = 2^{-\ell} (\ell \in \N)$, there exists $j \in \{0,1, \ldots d\}$ such that $p + v^{j}$ is $(1/(2d+2))$-central in its $r$-grid cell.  
\end{lemma}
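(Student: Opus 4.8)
The plan is to fix a single coordinate index $i\in\{1,\dots,d\}$ and track how the $d+1$ shifts $v^{(0)},\dots,v^{(d)}$ move the point $p$ within its $r$-grid cell in that coordinate, then combine the $d$ coordinates by a counting argument. First I would reduce to understanding the quantity $(p_i + j/(d+1)) \bmod r$ as $j$ ranges over $\{0,1,\dots,d\}$. Since $r = 2^{-\ell}$ and $1/(d+1)$ need not divide $r$, I would instead work modulo $r$ directly: write $t_i := (p_i \bmod r)/r \in [0,1)$, so that $p+v^{(j)}$ is $(1/(2d+2))$-central in coordinate $i$ exactly when $(t_i + j/(d+1))\bmod 1 \in [1/(2d+2),\, 1-1/(2d+2))$, i.e.\ when the shifted fractional position avoids the ``bad'' zone $B := [0,1/(2d+2)) \cup (1-1/(2d+2),1)$, an interval of total length $1/(d+1)$ around $0$.

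The key combinatorial step is then: for a fixed coordinate $i$, the $d+1$ shifted values $\{(t_i + j/(d+1)) \bmod 1 : j = 0,\dots,d\}$ are $d+1$ points equally spaced by $1/(d+1)$ around the circle $\R/\Z$. The bad zone $B$ is an arc of length $1/(d+1)$; an arc of length $1/(d+1)$ can contain \emph{at most one} of $d+1$ equally-spaced-by-$1/(d+1)$ points (two such points are at circular distance $\ge 1/(d+1)$, and an open-ish arc of length exactly $1/(d+1)$ can't contain two of them — here I'd have to be slightly careful with the endpoint conventions, which is why the definition uses the half-open central band). Hence for each coordinate $i$ there is \emph{at most one} bad shift $j$. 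There are $d$ coordinates, so at most $d$ shifts are bad in total; since there are $d+1$ shifts available, by pigeonhole at least one $j\in\{0,\dots,d\}$ is good for \emph{every} coordinate simultaneously, and for that $j$ the point $p+v^{(j)}$ is $(1/(2d+2))$-central in its $r$-grid cell. Note this argument doesn't actually need $d$ even; the hypothesis ``$d$ even'' is presumably inherited from the surrounding application (pairing up shifts, or the specific $c$ in Lemma \ref{cANNthm}) rather than being essential here, so I would either use it where the ambient construction needs it or remark that it is not needed for this lemma.

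The main obstacle I anticipate is purely bookkeeping at the boundary: making the ``at most one equally-spaced point lands in an arc of length $1/(d+1)$'' claim airtight given that the central band is defined with a $\ge$ on the left ($\alpha r \le p_i \bmod r$) and a strict $<$ on the right, so the bad zone is half-open, and the grid shifts can land exactly on a cell boundary. I would handle this by checking the two degenerate cases — a shift landing exactly at fractional position $0$, and two consecutive equally-spaced points straddling the arc — and confirming the half-open conventions were chosen precisely so that these don't create a second bad shift. Everything else (the passage from $j/(d+1)$ shifts in $\R^d$ to per-coordinate fractional shifts, and the final pigeonhole over $d+1$ shifts versus $\le d$ bad ones) is routine. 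I'd cite Chan~\cite{Chan1998} for the original form and present the above as the self-contained argument.
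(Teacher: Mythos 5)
The paper does not actually supply a proof of this lemma; it states it and cites Chan~\cite{Chan1998}. Your high-level plan --- per-coordinate analysis of bad shifts plus a pigeonhole over the $d+1$ shifts --- is the right skeleton and is indeed how Chan's argument goes. However, there is a concrete error in the reduction, and it is exactly that error which misleads you into thinking the hypothesis ``$d$ even'' is dispensable.

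Having set $t_i := (p_i \bmod r)/r \in [0,1)$, the normalized position of the shifted coordinate inside its $r$-grid cell is
\[
\frac{\bigl(p_i + j/(d+1)\bigr) \bmod r}{r} \;=\; \Bigl(t_i + \frac{j}{(d+1)\,r}\Bigr) \bmod 1 \;=\; \Bigl(t_i + \frac{j\,2^{\ell}}{d+1}\Bigr) \bmod 1,
\]
not $(t_i + j/(d+1)) \bmod 1$ as you wrote: you rescaled $p_i$ by $1/r$ but forgot to rescale the shift. This is not cosmetic. The set of shift offsets $\bigl\{\,j\,2^{\ell}/(d+1) \bmod 1 : j = 0,\dots,d\,\bigr\}$ equals the equally spaced set $\{0, 1/(d+1), \dots, d/(d+1)\}$ precisely when $\gcd(2^{\ell}, d+1) = 1$, i.e.\ when $d+1$ is odd, i.e.\ when $d$ is even. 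If $d$ is odd the offsets collapse and the lemma is false: take $d=1$, $r=1/2$, $p=(0)$. Then $\alpha = 1/4$, $\alpha r = 1/8$, and both $v^{(0)}=(0)$ and $v^{(1)}=(1/2)$ give $(p_1 + v^{(j)}_1) \bmod r = 0 < 1/8$, so neither shift is $(1/4)$-central. So your remark that ``this argument doesn't actually need $d$ even'' is wrong; the evenness of $d$ is precisely what makes $2^{\ell}$ invertible modulo $d+1$ and hence the per-coordinate offsets genuinely a full set of equally spaced points.

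Once you insert that step, the rest of your plan is sound. One small slip: the bad band should be $[0,\alpha) \cup [1-\alpha,1)$ (closed on the left at $1-\alpha$, matching the paper's $\alpha r \le \cdot < (1-\alpha)r$), not $(1-\alpha,1)$. Viewed as a single half-open arc $[1-\alpha, 1+\alpha)$ of length $2\alpha = 1/(d+1)$ on the circle $\mathbb{R}/\mathbb{Z}$, it contains at most one of the $d+1$ points spaced by exactly $1/(d+1)$, so each coordinate disqualifies at most one shift, and since $d < d+1$ the pigeonhole gives a shift that is good in every coordinate simultaneously.
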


The next lemma was proven by~\cite{Liao2001} for the Hilbert curve, and~\cite{Lopez2000}
for the z-curve. 
\begin{lemma}
Let $C_\epsilon(q)$ denote the closed hypercube with edges of length $2\epsilon$. centered on $q$.
 Given a point $q \in [0, 1)^d$ and $\epsilon$, where $0 < \epsilon < 1/(2d+2)$, there exists a $j, 0 \leq j \leq d$ such that $C_\epsilon(q + v^{j})$ is contained in an $r$-region with 
 $r$ satisfying $r/(4d+4) \leq \epsilon < r/ (2d+2)$.
\end{lemma}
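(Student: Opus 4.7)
The plan is to invoke Lemma~\ref{rCentral} at a dyadic scale $r$ carefully chosen to match $\epsilon$. First, I would select $r$ to be the unique power of two lying in the interval $((2d+2)\epsilon,\, 2(2d+2)\epsilon]$. Such an $r$ exists because any half-open interval of the form $(a, 2a]$ contains exactly one number of the form $2^{-\ell}$; moreover the hypothesis $\epsilon < 1/(2d+2)$ forces $(2d+2)\epsilon < 1$, so this unique $r$ satisfies $r \le 1$ and therefore $\ell \in \N$, as required by Lemma~\ref{rCentral}. The inequality $r/(4d+4) \le \epsilon < r/(2d+2)$ claimed in the conclusion follows immediately from the two defining endpoints of the interval.

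Next, I would apply Lemma~\ref{rCentral} to the point $q$ at this scale $r$, which produces an index $j \in \{0,1,\ldots,d\}$ such that $q + v^{j}$ is $(1/(2d+2))$-central in its $r$-grid cell. Unpacking the definition of centrality, for every coordinate $i$ we have $(q + v^{j})_i \bmod r \in [\, r/(2d+2),\, r(1 - 1/(2d+2))\,)$, which is equivalent to saying that the perpendicular distance from $q + v^{j}$ to each of the $2d$ faces of its enclosing $r$-grid cell is at least $r/(2d+2)$.

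To finish, I would put the two pieces together. The hypercube $C_\epsilon(q + v^{j})$ extends exactly $\epsilon$ in each coordinate direction from its center. By our choice of $r$ we have $\epsilon < r/(2d+2)$, which is strictly less than the distance from $q + v^{j}$ to every face of the surrounding $r$-grid cell. Therefore $C_\epsilon(q + v^{j})$ cannot escape that cell, giving the desired containment in an $r$-region of the prescribed size.

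The main obstacle is really the existence step: one has to check that the dyadic scale $r$ can always be placed in the target window given only the bound $\epsilon < 1/(2d+2)$, and that $r \le 1$ so Lemma~\ref{rCentral} applies. Once that is in hand, the rest is a direct composition of the centrality statement with the definition of $C_\epsilon$. A minor caveat is that Lemma~\ref{rCentral} is stated for even $d$; for odd $d$ the same centrality argument carries through with the stated $d+1$ shift vectors, so the conclusion extends without structural change.
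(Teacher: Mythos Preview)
The paper does not actually prove this lemma; it simply attributes it to Liao \emph{et al.}~\cite{Liao2001} (for the Hilbert curve) and L\'opez and Lu~\cite{Lopez2000} (for the z-curve) and moves on. Your argument is correct and is in fact the natural self-contained proof: pick the unique dyadic scale $r$ in the window $((2d+2)\epsilon,\,(4d+4)\epsilon]$, invoke Chan's centrality lemma (Lemma~\ref{rCentral}) at that scale to obtain a shift index $j$ making $q+v^{j}$ at distance at least $r/(2d+2)>\epsilon$ from every face of its $r$-grid cell, and conclude that the $\epsilon$-cube fits. This is the same reduction the cited works carry out, so there is no substantive difference in approach --- you have just written out what the paper leaves to a citation. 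The two caveats you raise (the convention on $\N$ when $r=1$, and the even-$d$ hypothesis in Lemma~\ref{rCentral}) are cosmetic and do not affect the argument.
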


Given the previous lemmas,
Liao {\it et al.}~\cite{Liao2001} showed the following theorem. We leverage this result to achieve fully-retroactive $(1+\epsilon)$-approximate nearest neighbor queries.
\begin{theorem}
Let $P$ be a set of points in $\R^d$. Define a constant $c = \sqrt{d}(4d+4) + 1$. 
Suppose that we have $d+1$ lists $P + v^{j}, j= 0, \ldots, d$, each one sorted according to its z-order.  We can find a query point $q$'s $c$-approximate nearest neighbor in $P$ by examining the the $2(d+1)$ predecessors and successors of $q$ in the lists. 
\end{theorem}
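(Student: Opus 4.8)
The plan is to combine the three preceding lemmas in this appendix into a single geometric argument. Fix a query point $q$ and let $p^*$ be its true nearest neighbor in $P$, with $\epsilon = \|q - p^*\|$. First I would choose a dyadic scale $r = 2^{-\ell}$ so that $\epsilon$ falls in the window $r/(4d+4) \le \epsilon < r/(2d+2)$ demanded by the second lemma of the appendix; since consecutive dyadic scales differ by a factor of $2 < (4d+4)/(2d+2)$, such an $\ell$ exists. Applying that lemma gives a shift index $j \in \{0,\dots,d\}$ for which the hypercube $C_\epsilon(q + v^{(j)})$ of half-side $\epsilon$ centered at the shifted query is entirely contained inside a single $r$-region, i.e. a single quadtree cell at the appropriate level of the shifted point set $P + v^{(j)}$. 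The key point is that $p^* + v^{(j)}$ lies in $C_\epsilon(q+v^{(j)})$ (because $\|p^* - q\| = \epsilon$ so $p^*+v^{(j)}$ is within $\ell_\infty$-distance $\epsilon$ of $q+v^{(j)}$), hence $p^* + v^{(j)}$ and $q + v^{(j)}$ are in the same quadtree cell.

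Next I would invoke Lemma~\ref{shuffleOrderAppendix}: the points of $P + v^{(j)}$ lying in that quadtree cell form a contiguous interval in the z-order of $P + v^{(j)}$. Consider where $q + v^{(j)}$ sits in this z-order. Either $q+v^{(j)}$ itself falls inside that contiguous interval, in which case both its z-order predecessor and successor among the points of $P + v^{(j)}$ that lie in the cell are themselves in the cell; or $q+v^{(j)}$ falls outside the interval, in which case the whole interval lies on one side of $q+v^{(j)}$, so one of its predecessor/successor in $P+v^{(j)}$ is the endpoint of the interval — in particular a point of the cell. Either way, at least one of the two neighbors (predecessor or successor) of $q+v^{(j)}$ in the z-order of $P + v^{(j)}$, call it $p' + v^{(j)}$, lies in the same quadtree cell as $p^* + v^{(j)}$. (Here one must be slightly careful: we need $p^*+v^{(j)}$ itself, not just the cell, to witness non-emptiness on the relevant side; since $p^*+v^{(j)}$ is in the interval this is immediate.)

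Finally I would bound $\|p' - q\|$. Both $p' + v^{(j)}$ and $p^* + v^{(j)}$ lie in a common $r$-region, whose diameter is $\sqrt{d}\, r$, so $\|p' - p^*\| \le \sqrt{d}\, r \le \sqrt{d}(4d+4)\epsilon$. By the triangle inequality $\|p' - q\| \le \|p' - p^*\| + \|p^* - q\| \le \bigl(\sqrt{d}(4d+4) + 1\bigr)\epsilon = c\,\epsilon$, which is exactly the claimed $c$-approximation, and $p'$ is one of the $2(d+1)$ predecessors/successors of $q$ across the $d+1$ z-orders. The main obstacle is the bookkeeping around the shifts: the lemmas as stated are phrased for $q \in [0,1)^d$ and for even $d$ (via Lemma~\ref{rCentral}), and the shift $v^{(j)}$ is added to the query when locating the region but the lists store $P + v^{(j)}$, so I need to check consistently that comparing $q + v^{(j)}$ against $P + v^{(j)}$ is the same as comparing $q$ against $P$ in the $j$-shifted z-order, and that the odd-$d$ and scaling-to-the-unit-cube reductions do not change the constant $c$. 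Once that alignment is pinned down, the geometric core above is short.
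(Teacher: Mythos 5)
Your proposal is correct and follows essentially the same route as the paper's proof: pick the dyadic $r$ with $r/(4d+4)\le\delta<r/(2d+2)$, invoke the hypercube-containment lemma to find the good shift $j$, use Lemma~\ref{shuffleOrderAppendix} to argue that one $z$-order neighbor of $q+v^{(j)}$ lands in the same $r$-region as $p^*+v^{(j)}$, and bound the distance by the cell diameter. One small note: your last step bounds $\|p'-p^*\|\le\sqrt{d}\,r$ and applies the triangle inequality, arriving cleanly at the stated $c=\sqrt{d}(4d+4)+1$, whereas the paper bounds the distance from $p'$ to $q$ directly (both lie in the $r$-region) and would actually give $\sqrt{d}(4d+4)$; and your justification that a suitable $\ell$ exists should read that consecutive windows abut exactly, since $(4d+4)/(2d+2)=2$, not $>2$ --- neither affects correctness.
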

\begin{proof} 
Let $\delta$ be the distance between a query point $q$ and its true nearest neighbor $p \in P$. 
By Lemma \ref{shuffleOrderAppendix}, the $r$-region $Q_j$ that contains both $q$ and $p$ contains all points in the z-order between $q$ and $p$. Thus $Q_j$ must contain the predecessor or the successor of $q$. 
Furthermore, by Lemma \ref{rCentral},  there exists an $r$-region that contains the hypercube 
$C_\delta(q + v^{(i)})$  such that $r/(4d+4) \leq \delta < r/(2d+2)$, for some $0 \leq i \leq d$. 
Therefore one of the $2(d+1)$ successors and predecessors is at most distance $r \leq \sqrt{d}(4d+4)\delta$  to $q$. 
\end{proof}

\label{app:figs}
\begin{figure}
\centering
\includegraphics[scale=1]{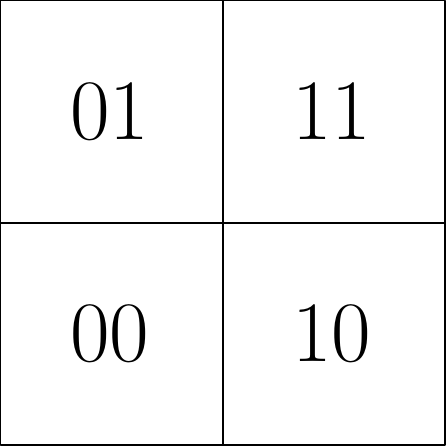}
\caption{\label{figQuadrants} Suppose $p \in [0,1)^2$. The next two bits in the shuffle of $p$ correspond to which quadrant of the cell contains $p$.}
\end{figure}
\begin{figure}
\centering
\includegraphics[width=.95\textwidth]{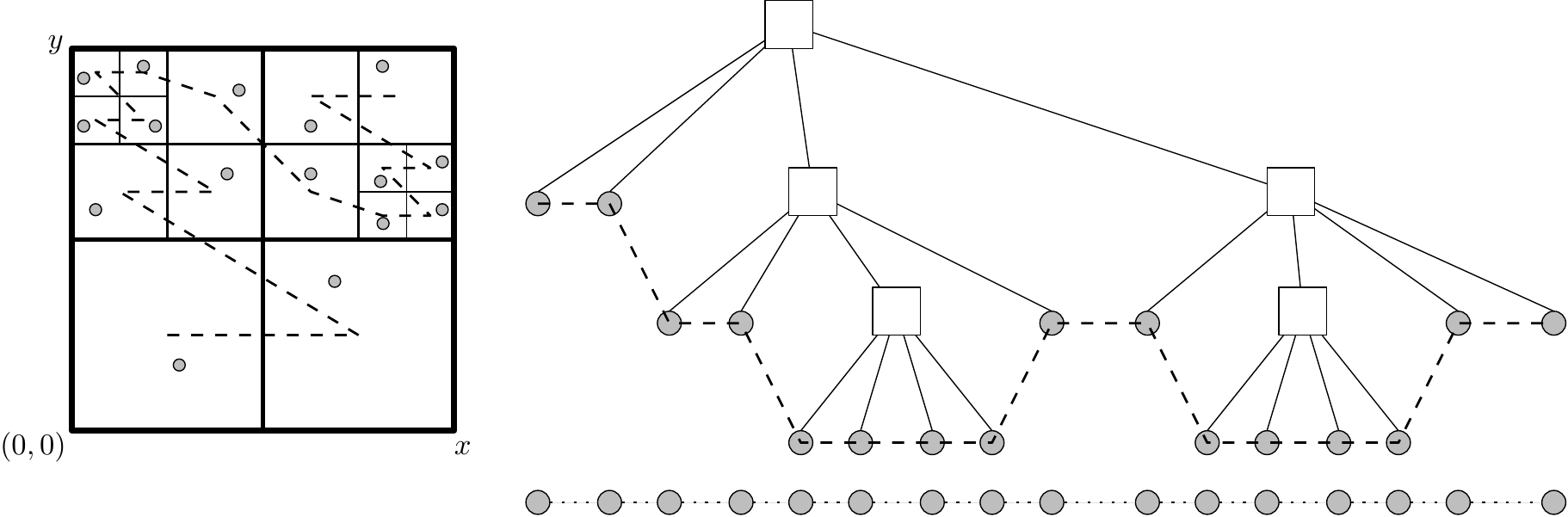}
\caption{
\label{fig:zorder}
The z-order curve corresponds to the order the leaves would be visited in an in-order traversal of the quadtree.
We can store the points in a linked list in this order. Any element between two elements $i,j$ in the linked list must fall in the same quadtree cell as $i$ and $j$. }
\end{figure}

\end{appendix}

\end{document}